\def\BibTeX{{\rm B\kern-.05em{\sc i\kern-.025em b}\kern-.08em
    T\kern-.1667em\lower.7ex\hbox{E}\kern-.125emX}}
\newtheorem{thm}{Theorem}
\newtheorem{definition}{Definition}
\newtheorem{problem}{Problem}
\newtheorem{remark}{Remark}
\newtheorem{lemma}{Lemma}
\newcommand{\state}{\ensuremath{\mathcal{S}}}
\newcommand{\action}{\ensuremath{\mathcal{A}}}
\newcommand{\transition}[1]{\ensuremath{T_{#1}}}
\newcommand{\probmat}{\ensuremath{\mathbf{P}}}
\newcommand{\lbprobmat}{\ensuremath
{\check{\probmat}}}
\newcommand{\ubprobmat}{\ensuremath{\hat{\probmat}}}
\newcommand{\pertmat}{\ensuremath{\mathbf{X}}}
\newcommand{\pertset}[1]{\ensuremath{\mathcal{PS}^{#1}}}
\newcommand{\mdp}{\ensuremath{\mathcal{M}}}
\newcommand{\mc}{\ensuremath{\mathcal{C}}}
\newcommand{\prsat}[1]{\ensuremath{Pr(s_0\models_{#1} \varphi)}}
\newcommand{\rational}{\ensuremath{\mathbb{Q}}}
\newcommand{\paramset}{\ensuremath{X}}
\newcommand{\etal}{\emph{et. al}}
\title{Adversarial Robustness Verification and Attack Synthesis in Stochastic Systems\thanks{To appear in \textit{35th IEEE Computer Security Foundations Symposium (2022)}}}
\author{Lisa Oakley \qquad Alina Oprea \qquad Stavros Tripakis}
\affil{ \small \emph{Khoury College of Computer Sciences, Northeastern University }}
\date{\vspace{-5ex}}
\begin{document}
\maketitle

\begin{abstract}
    Probabilistic model checking is a useful technique for specifying and verifying properties of stochastic systems including randomized protocols and reinforcement learning models. However, these methods rely on the assumed structure and probabilities of certain system transitions. These assumptions may be incorrect, and may even be violated by an adversary who gains control of some system components.

    In this paper, we develop a formal framework for adversarial robustness in systems modeled as discrete time Markov chains (DTMCs). We base our framework on existing methods for verifying probabilistic temporal logic properties and extend it to include deterministic, memoryless policies acting in Markov decision processes (MDPs). Our framework includes a flexible approach for specifying structure-preserving and non structure-preserving adversarial models several adversarial models with different capabilities to manipulate the system. We outline a class of threat models under which adversaries can perturb system transitions, constrained by an $\varepsilon$ ball around the original transition probabilities.

    We define three main DTMC adversarial robustness problems: adversarial robustness verification, maximal $\delta$ synthesis, and worst case attack synthesis. We present two optimization-based solutions to these three problems, leveraging traditional and parametric probabilistic model checking techniques. We then evaluate our solutions on two stochastic protocols and a collection of Grid World case studies, which model an agent acting in an environment described as an MDP. We find that the parametric solution results in fast computation for small parameter spaces. In the case of less restrictive (stronger) adversaries, the number of parameters increases, and directly computing property satisfaction probabilities is more scalable. We demonstrate the usefulness of our definitions and solutions by comparing system outcomes over various properties, threat models, and case studies.
\end{abstract}
\section{Introduction}\label{sec:intro}
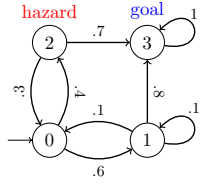
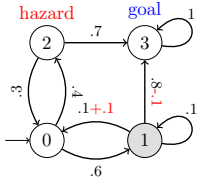
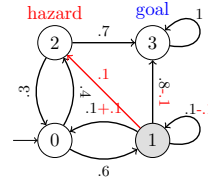
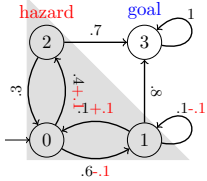
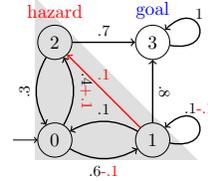
\begin{figure}[ht]
    \centering

\begin{subfigure}[b]{\linewidth}
    \centering
    \scalebox{.65}{%
    \begin{tikzpicture}[main/.style = {draw, circle},node distance = 2cm, initial text =,state/.style={circle, draw, minimum size=.7cm}]] 
        \node[state,initial] (0) {$0$}; 
        \node[state] (1) [right of=0] {$1$};
        \node[state] (2) [above of=0,label=above:{\normalsize \textcolor{red}{hazard}}] {$2$};
        \node[state] (3) [above of=1,label={\normalsize \textcolor{blue}{goal}}] {$3$};

        % 0 up
        \draw[->, thick] (0) [out=60,in=-60] to node [above,midway,rotate=-90] {\footnotesize .4} node [swap] {} (2);
        % 0, right
        \draw[->, thick] (0) [out=-30,in=180+30] to node [below,midway] {\footnotesize .6} node [swap] {} (1);

        % 1 up
        \draw[->, thick] (1) [out=90,in=-90] to node [above,midway,rotate=-90] {\footnotesize .8} node [swap] {} (3);
        % 1 left
        \draw[->, thick] (1) [out=180-30,in=30] to node [above,midway,rotate=0] {\footnotesize .1} node [swap] {} (0);
        % 1 self
        \draw[->, thick] (1) [out=-10,in=45,loop] to node [above,midway, yshift=3pt] {\footnotesize .1} node [swap] {} (1);

        % 2 down
        \draw[->, thick] (2) [out=270-30,in=90+30] to node [above,midway,rotate=90] {\footnotesize .3} node [swap] {} (0);
        % 2 right
        \draw[->, thick] (2) [out=0,in=180] to node [above,midway,rotate=0] {\footnotesize .7} node [swap] {} (3);

        % 3 self
        \draw[->, thick] (3) [out=-10,in=45,loop] to node [above,midway,yshift=1pt] {\footnotesize 1} node [swap] {} (3);
    \end{tikzpicture} 
    }
    \caption{Original DTMC $\mc=(\state,s_0,\probmat)$.}
\end{subfigure}

\begin{subfigure}[b]{.42\linewidth}
    \centering
    \scalebox{.65}{%
    \begin{tikzpicture}[main/.style = {draw, circle},node distance = 2cm, initial text =,state/.style={circle, draw, minimum size=.7cm}]] 
        \node[state,initial] (0) {$0$}; 
        \node[state] (1) [right of=0,fill=lightgray!50] {$1$};
        \node[state] (2) [above of=0,label=above:{\normalsize \textcolor{red}{hazard}}] {$2$};
        \node[state] (3) [above of=1,label={\normalsize \textcolor{blue}{goal}}] {$3$};

        % 0 up
        \draw[->, thick] (0) [out=60,in=-60] to node [above,midway,rotate=-90] {\footnotesize .4} node [swap] {} (2);
        % 0, right
        \draw[->, thick] (0) [out=-30,in=180+30] to node [below,midway] {\footnotesize .6} node [swap] {} (1);

        % 1 up
        \draw[->, thick] (1) [out=90,in=-90] to node [above,midway,rotate=-90] {\footnotesize .8\textcolor{red}{-.1}} node [swap] {} (3);
        % 1 left
        \draw[->, thick] (1) [out=180-30,in=30] to node [above,midway,rotate=0] {\footnotesize .1\textcolor{red}{+.1}} node [swap] {} (0);
        % 1 self
        \draw[->, thick] (1) [out=-10,in=45,loop] to node [above,midway, yshift=3pt] {\footnotesize .1} node [swap] {} (1);

        % 2 down
        \draw[->, thick] (2) [out=270-30,in=90+30] to node [above,midway,rotate=90] {\footnotesize .3} node [swap] {} (0);
        % 2 right
        \draw[->, thick] (2) [out=0,in=180] to node [above,midway,rotate=0] {\footnotesize .7} node [swap] {} (3);

        % 3 self
        \draw[->, thick] (3) [out=-10,in=45,loop] to node [above,midway,yshift=1pt] {\footnotesize 1} node [swap] {} (3);
    \end{tikzpicture} 
    }
    \caption{Worst case SPSS attack with vulnerable state 1.}
\end{subfigure}
\qquad
\begin{subfigure}[b]{.42\linewidth}
    \centering
    \scalebox{.65}{%
    \begin{tikzpicture}[main/.style = {draw, circle},node distance = 2cm, initial text =,state/.style={circle, draw, minimum size=.7cm}]] 
        \node[state,initial] (0) {$0$}; 
        \node[state] (1) [right of=0,fill=lightgray!50] {$1$};
        \node[state] (2) [above of=0,label=above:{\normalsize \textcolor{red}{hazard}}] {$2$};
        \node[state] (3) [above of=1,label={\normalsize \textcolor{blue}{goal}}] {$3$};

        % 0 up
        \draw[->, thick] (0) [out=60,in=-60] to node [above,midway,rotate=-90] {\footnotesize .4} node [swap] {} (2);
        % 0, right
        \draw[->, thick] (0) [out=-30,in=180+30] to node [below,midway] {\footnotesize .6} node [swap] {} (1);

        % 1 up
        \draw[->, thick] (1) [out=90,in=-90] to node [above,midway,rotate=-90] {\footnotesize .8\textcolor{red}{-.1}} node [swap] {} (3);
        % 1 left
        \draw[->, thick] (1) [out=180-30,in=30] to node [above,midway,rotate=0] {\footnotesize .1\textcolor{red}{+.1}} node [swap] {} (0);
        % 1 self
        \draw[->, thick] (1) [out=-10,in=45,loop] to node [above,midway, yshift=3pt] {\footnotesize .1\textcolor{red}{-.1}} node [swap] {} (1);
        % 1 diag
        \draw[draw=red,->, thick] (1) to node [above,midway, yshift=3pt] {\footnotesize \textcolor{red}{.1}} node [swap] {} (2);

        % 2 down
        \draw[->, thick] (2) [out=270-30,in=90+30] to node [above,midway,rotate=90] {\footnotesize .3} node [swap] {} (0);
        % 2 right
        \draw[->, thick] (2) [out=0,in=180] to node [above,midway,rotate=0] {\footnotesize .7} node [swap] {} (3);

        % 3 self
        \draw[->, thick] (3) [out=-10,in=45,loop] to node [above,midway,yshift=1pt] {\footnotesize 1} node [swap] {} (3);
    \end{tikzpicture} 
    }
    \caption{Worst case SS attack with vulnerable state 1.}
\end{subfigure}

\begin{subfigure}[b]{.42\linewidth}
    \centering
    \scalebox{.65}{%
    \begin{tikzpicture}[main/.style = {draw, circle},node distance = 2cm, initial text =,state/.style={circle, draw, minimum size=.7cm}]] 
        \draw (-.4,-.4) [fill=lightgray!50,draw=lightgray!50] node {}
        -- (-.4,2.9) node {}
        -- (2.9,-.4) node {}
        -- cycle;
        \node[state,initial] (0) [] {$0$}; 
        \node[state] (1) [right of=0] {$1$};
        \node[state] (2) [above of=0,label=above:{\normalsize \textcolor{red}{hazard}}] {$2$};
        \node[state] (3) [above of=1,label={\normalsize \textcolor{blue}{goal}}] {$3$};

        % 0 up
        \draw[->, thick] (0) [out=60,in=-60] to node [above,midway,rotate=-90] {\footnotesize .4\textcolor{red}{+.1}} node [swap] {} (2);
        % 0, right
        \draw[->, thick] (0) [out=-30,in=180+30] to node [below,midway] {\footnotesize .6\textcolor{red}{-.1}} node [swap] {} (1);

        % 1 up
        \draw[->, thick] (1) [out=90,in=-90] to node [above,midway,rotate=-90] {\footnotesize .8} node [swap] {} (3);
        % 1 left
        \draw[->, thick] (1) [out=180-30,in=30] to node [above,midway,rotate=0] {\footnotesize .1\textcolor{red}{+.1}} node [swap] {} (0);
        % 1 self
        \draw[->, thick] (1) [out=-10,in=45,loop] to node [above,midway, yshift=3pt] {\footnotesize .1\textcolor{red}{-.1}} node [swap] {} (1);

        % 2 down
        \draw[->, thick] (2) [out=270-30,in=90+30] to node [above,midway,rotate=90] {\footnotesize .3} node [swap] {} (0);
        % 2 right
        \draw[->, thick] (2) [out=0,in=180] to node [above,midway,rotate=0] {\footnotesize .7} node [swap] {} (3);

        % 3 self
        \draw[->, thick] (3) [out=-10,in=45,loop] to node [above,midway,yshift=1pt] {\footnotesize 1} node [swap] {} (3);
    \end{tikzpicture} 
    }
    \caption{Worst case SPST attack with transitions between states $0,\;1,\;2$ vulnerable.}
\end{subfigure}
\qquad
\begin{subfigure}[b]{.42\linewidth}
    \centering
    \scalebox{.65}{%
    \begin{tikzpicture}[main/.style = {draw, circle},node distance = 2cm, initial text =,state/.style={circle, draw, minimum size=.7cm}]] 
        \draw (-.4,-.4) [fill=lightgray!50,draw=lightgray!50] node {}
        -- (-.4,2.9) node {}
        -- (2.9,-.4) node {}
        -- cycle;
        \node[state,initial] (0) [] {$0$}; 
        \node[state] (1) [right of=0] {$1$};
        \node[state] (2) [above of=0,label=above:{\normalsize \textcolor{red}{hazard}}] {$2$};
        \node[state] (3) [above of=1,label={\normalsize \textcolor{blue}{goal}}] {$3$};

        % 0 up
        \draw[->, thick] (0) [out=60,in=-60] to node [above,midway,rotate=-90] {\footnotesize .4\textcolor{red}{+.1}} node [swap] {} (2);
        % 0, right
        \draw[->, thick] (0) [out=-30,in=180+30] to node [below,midway] {\footnotesize .6\textcolor{red}{-.1}} node [swap] {} (1);

        % 1 up
        \draw[->, thick] (1) [out=90,in=-90] to node [above,midway,rotate=-90] {\footnotesize .8} node [swap] {} (3);
        % 1 left
        \draw[->, thick] (1) [out=180-30,in=30] to node [above,midway,rotate=0] {\footnotesize .1} node [swap] {} (0);
        % 1 self
        \draw[->, thick] (1) [out=-10,in=45,loop] to node [above,midway, yshift=3pt] {\footnotesize .1\textcolor{red}{-.1}} node [swap] {} (1);
        % 1 diag
        \draw[draw=red,->, thick] (1) to node [above,midway, yshift=3pt] {\footnotesize \textcolor{red}{.1}} node [swap] {} (2);

        % 2 down
        \draw[->, thick] (2) [out=270-30,in=90+30] to node [above,midway,rotate=90] {\footnotesize .3} node [swap] {} (0);
        % 2 right
        \draw[->, thick] (2) [out=0,in=180] to node [above,midway,rotate=0] {\footnotesize .7} node [swap] {} (3);

        % 3 self
        \draw[->, thick] (3) [out=-10,in=45,loop] to node [above,midway,yshift=1pt] {\footnotesize 1} node [swap] {} (3);
    \end{tikzpicture} 
    }
    \caption{Worst case ST attack with transitions between states $0,\;1,\;2$ vulnerable.}
\end{subfigure}
    \caption{Worst case attack on DTMC $\mc$ with respect to $\varphi=((s\not=2) \mathbf{U}^{\leq 10} (s=3))$ over four $\varepsilon,max$-bounded threat models, with perturbation budget $\varepsilon=.1$. Selected states (SS) and structure-preserving selected states (SPSS) adversaries may perturb transition probabilities which come from a set of ``vulnerable states.'' Selected transition (ST) and structure-preserving selected transition (SPST) adversaries may only perturb transition probabilities in some pre-defined set of transitions. SS and ST adversaries can add transitions within their vulnerability sets, and SPSS and SPST adversaries can only perturb transitions which existed in the original DTMC.}
    \label{fig:attack}
\end{figure}

\begin{figure*}[ht]
    \centering
    \includegraphics[width=.6\linewidth]{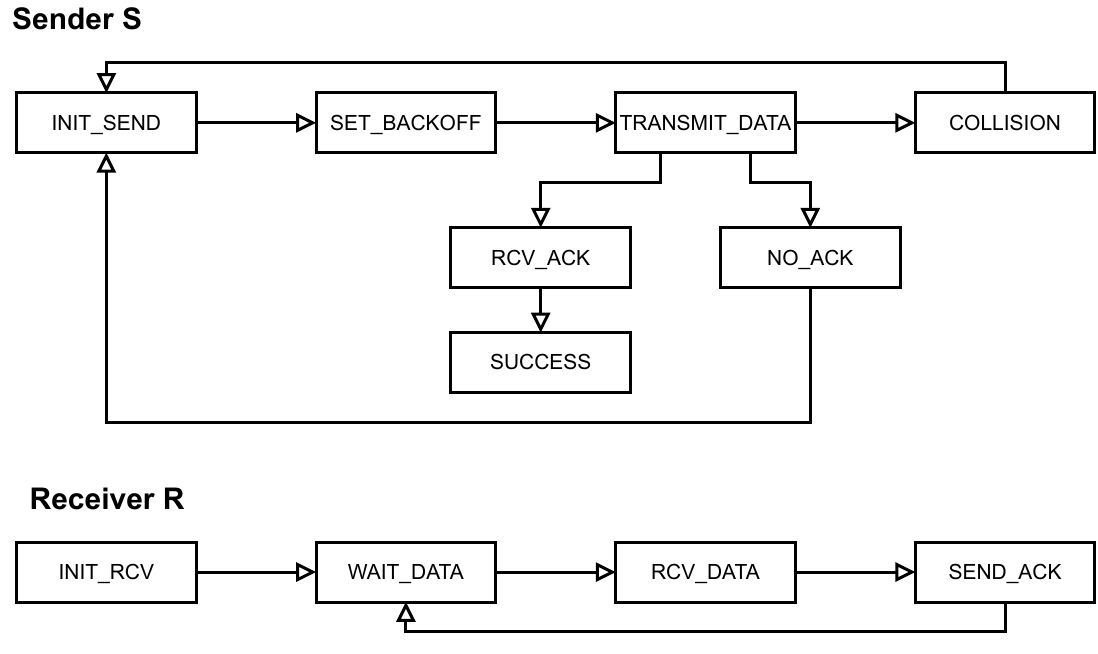}
    \caption{Probabilistic collision avoidance protocol model with exponential backoff.}
    \label{fig:backoff}
\end{figure*}

Specification and verification of stochastic systems including discrete time Markov chains (DTMCs) and Markov decisions processes (MDPs) can be used to analyze a large class of probabilistic systems and algorithms~\cite{KNP12a,DKNP06,KN02,daws_symbolic_2005}. A promising recent line of research has also used probabilistic model checking to verify properties of deep reinforcement learning~\cite{gu_demonstration_2020,gotsman_deep_2020}. The majority of work on property verification on these systems considers fixed, known probabilities of transitions between states. In real-world scenarios, these assumed transition probabilities may be incorrect or manipulated by an adversary who gains control of some or all components of the system.

In this paper, we use existing frameworks for verification of stochastic systems to develop a formal definition for adversarial robustness in systems defined as DTMCs with respect to a wide variety of properties~\cite{storm,param,prophesy}. 
Under our definition, an adversary can generate bounded perturbations, structured as a matrix, on a DTMC or MDP with the goal of decreasing the probability of satisfying certain logical properties. We model the constraints on the perturbation matrix as adversarial perturbation sets, creating a flexible framework to define threat models. We specify a class of \emph{$\varepsilon,d$-bounded} threat models for which the perturbation matrix is bounded by some scalar $\varepsilon$ with respect to some metric distance $d$.

In the past, work has been done to address the issue of uncertain transition probabilities in DTMCs. The majority of these works focus on uncertainty in models with small parameter spaces, considering only a limited set of properties ~\cite{winkler_complexity_2019,quatmann_parameter_2016,prophesy}. In the past few years, improvements have been made to these techniques making probabilistic model checking in the presence of uncertainty more feasible for large-scale systems with many uncertainties ~\cite{junges2019parameter,cubuktepe_synthesis_2018}. However, most existing work considers only non-adversarial uncertainty, and does not account for situations in which an adversary may gain control of certain system components in order to intentionally reduce utility of the system. Moreover, most analysis and evaluation has assumed uncertainties which preserve the structure of the original DTMC. We refer to models of attackers who exploit this kind of uncertainty as \emph{structure-preserving} threat models. Since the model probabilities are often generated by observation of the system, they do not account for unexplored malicious transitions. In our paper, we additionally consider this scenario, where an attacker is able to add transitions to the system (i.e., increase the probability of these transitions up from zero). We refer to these attackers as \emph{non structure-preserving}. Fig. \ref{fig:attack} shows examples of   attacks under structure-preserving and non structure-preserving $\varepsilon,max$-bounded threat models on a 4-state DTMC.

Providing a flexible formalism which can model a wide range of attackers has real world importance. For example, consider a protocol which implements collision avoidance using exponential backoff \cite{KNSW07,KNS02a,Fru11}. In Fig. \ref{fig:backoff}, we model the components of a protocol with multiple senders and receivers who communicate over a wireless channel such that senders cannot send messages simultaneously. In the event of a a collision (two senders send a message at the same time), neither message is sent, and both senders must make another attempt to send the message. In this model, senders use a probabilistic backoff mechanism to reduce the probability of collision. After receiving the sender's message, the receiver sends an acknowledgment and the sender moves to RCV\_ACK.

We can consider the probability that a message is successfully sent and received within $n$ time ticks as a reasonable measure of utility for the collision avoidance protocol. Under this model, a structure-preserving attacker with access to a sender could increase the backoff by decreasing the probability that a message is sent, thereby unnecessarily slowing the time between message sends, and decreasing the overall utility. However, considering only structure-preserving attackers is limiting. For example, an attacker could gain access to a receiver and trigger arbitrary receipt acknowledgments, increasing the probability that the sender moves to the RCV\_ACK state directly from the SET\_BACKOFF or COLLISION state and bypassing the TRANSMIT\_DATA state entirely. This would effectively result in a dropped message, causing a significant decrease in system utility.

In addition to probabilistic protocols, we consider the cyber-physical scenario of a robot moving around adjacent spaces in a grid to bring packages from one side of a warehouse to the other. This is a variation of the Grid World problem commonly used to evaluate reinforcement learning algorithms \cite{gridworld}. The grid can contain fixed ``hazard'' states which the robot incurs damage by entering (such as an open stairwell) and obstacles which the robot cannot pass through (such as a shelf or barrel). At each time tick, the robot chooses a preferred direction to move based on a fixed policy. However, based on probabilistic environmental factors such as another robot blocking an adjacent space, the actual direction the robot moves after choosing an action is modeled probabilistically based on observations of the environment over time. In this scenario, we can model utility as the probability that a robot picks up a package at the warehouse entrance (bottom left corner) and reaches the goal shelf (top left corner) within a fixed time interval while avoiding all hazard states. Here, a structure-preserving attacker can cause the robot to take a longer path on average by increasing the probability that another robot is in its way. A non structure-preserving attacker could additionally remove a shelf or barrel to introduce new hazard states which were previously unreachable. In both the collision avoidance and Grid World examples, our threat model definition allows for modeling of both structure-preserving and non structure-preserving scenarios, providing a more flexible and powerful attacker formalism than in prior work.

After specifying our classes of threat models, we go on to define three main adversarial robustness problems. In the \emph{DTMC adversarial robustness verification} problem, we must determine whether a given DTMC satisfies adversarial robustness with respect to a given property, bound, and adversarial model. In the \emph{maximal $\delta$ synthesis} problem, we are given a DTMC, property, and adversarial model, and must find the maximal bound $\delta$ for which the DTMC is adversarially robust. In the \emph{worst-case attack synthesis} problem, we are given a DTMC, property, and adversarial model, and must find a perturbation to the original DTMC which results in the largest $\delta$ between the probability that the original and perturbed DTMCs satisfy the property. We refer to these problems collectively as the \emph{DTMC adversarial robustness problems}. We also extend the robustness definitions to agents acting in MDPs under deterministic, memoryless policies. These problems are important because they allow a system designer to audit their systems from an adversarial perspective, highlighting vulnerabilities and providing feedback on which components are most impactful when under attack. 

To position our problem definitions in the context of other literature, we compare to the uncertain Markov chain (UMC) model checking problem \cite{hermanns_model-checking_2006}. We show that DTMC adversarial robustness \emph{verification} can be reduced to the UMC model checking problem for interval-valued discrete time Markov chains (IDTMC) in the case of $\varepsilon,max$-bounded threat models and PCTL properties. Our definitions support more properties, including PCTL$^*$, and our worst case adversary synthesis and maximal $\delta$ synthesis problems provide the added benefit of explaining which adversarial perturbation is a counterexample to robustness.

After defining our problem space, we outline two optimization-based solutions which use the Sequential Least Squares (SLSQP) minimization algorithm~\cite{2020SciPy-NMeth} and existing probabilistic model checking techniques for explicit and parametric DTMCs. In particular, we use the Prism \cite{prism} probabilistic model checker for computing explicit property satisfaction probabilities, and the PARAM~\cite{param} parametric model checking tool to compute a closed form, symbolic representation of the property satisfaction probability.  We evaluate these solutions on two protocols modeled as DTMCs \cite{baier2008principles,daws_symbolic_2005} and one Grid World MDP \cite{gridworld}. Our experiment code is publicly available: \url{https://github.com/lisaoakley/dtmc_attack_synthesis}. We find that symbolic methods work well when the parameter space is small, but that direct computation of property satisfaction probabilities is faster for larger parameter spaces. In our evaluation, we show how our methods can be useful in determining which system components and transitions are most important to preserving robustness of the system across many properties and threat models. We also compare various threat models for different case studies and properties. In Section \ref{sec:adv_ml}, we draw a connection between our problem definition and solutions to the notion of adversarial robustness in supervised machine learning models with respect to deployment-time adversarial examples \cite{biggio2013evasion,szegedy2013intriguing}.

We have four main contributions in this paper. First, we define adversarial robustness for DTMCs with a flexible threat model definition including structure-preserving and non structure-preserving adversaries. Second, we propose three DTMC adversarial robustness problems and provide two solutions to these problems based on existing probabilistic model checking techniques. Third, we provide theoretical reduction to the UMC model checking problem \cite{hermanns_model-checking_2006} for our proposed verification problem with $\varepsilon,max$-bounded threat models. Finally, we evaluate these definitions and solutions on three case studies and compare adversarial models with different capabilities.
\section{Background}\label{sec:background}
We start with a brief background on probabilistic model checking, including the definitions of discrete time Markov chains (DTMCs), Markov decision processes (MDPs), parametric DTMCs (pDTMCs) and the PCTL$^*$ property specification language \cite{baier2008principles,daws_symbolic_2005}. These modeling tools will be useful in our definitions of robustness, as well as the evaluation of different model checking techniques. We also point out the relationship between a policy, an MDP, and the resulting composed DTMC.

\subsection{Probabilistic Models}\label{sec:prob_models}
DTMCs are transition systems where transitions between states are represented by probability distributions. MDPs are transition systems which can model both probabilistic transitions and nondeterminism. An MDP composed with a deterministic, memoryless policy results in a DTMC, as the policy resolves all nondeterminism in the system.
\begin{definition}[DTMC]
    A \emph{discrete time Markov chain (DTMC)} is a 5-tuple 
    \begin{equation}
        \nonumber\mc=(\state,s_0,\probmat,AP,L)
    \end{equation}
    where \state{} is a finite set of states with $s_0\in \state$ the initial state, $AP$ a set of atomic propositions, and $L:\state{}\rightarrow 2^{AP}$ a labeling function. \probmat{} is an $|\state|\times|\state|$ matrix where $0\leq\probmat_{s,s'}\leq 1$ indicates the probability that the system transitions from state $s$ to $s'$. $\forall s\in \state, \;\sum_{s'\in\state}\probmat_{s,s'}=1$ (all rows in \probmat{} sum to 1).
\end{definition}

\begin{definition}[MDP]
    A \emph{Markov decision process (MDP)} is a 6-tuple 
    \begin{equation}
        \nonumber\mdp{}=(\state,s_0,\action,\transition{},AP,L)
    \end{equation}
    where $\state{},\action{}$ are finite sets of states and actions respectively, $s_0\in \state$ is the initial state, $AP$ a set of atomic propositions, and $L:\state{}\rightarrow 2^{AP}$ a labeling function. $\transition{}:\state\times\action\times\state\rightarrow [0,1]$ is a transition function where $\transition{}(s,a,s')$ represents the probability that the system transitions from $s$ to $s'$, given action $a$.
\end{definition}
\begin{definition}[DTMC of an MDP]
    The \emph{DTMC of an MDP}, $\mdp=(\state,s_0,\action,\transition,AP,L)$, induced by a policy $\sigma$ is
    \begin{equation}
        \nonumber\mdp_\sigma=(\state,s_0,\probmat,AP,L)
    \end{equation}
    where $\probmat_{s,s'}=\transition{}(s,\sigma(s),s')\;\forall s,s'\in\state$, and $\sigma:\state{}\rightarrow\action{}$ is a memoryless, deterministic policy.\label{def:DTMCofMDP}
\end{definition}

In some cases, we will need to consider a DTMC for which some or all of the transitions are of unspecified probability. To do so, we consider a parametric DTMC which describes some transition probabilities in the system symbolically, rather than with an explicit value \cite{daws_symbolic_2005}. 
    
\begin{definition}[Parametric DTMC (pDTMC)]
A (labeled) \emph{parametric DTMC} of formal parameter set \paramset{} can be described as a 5-tuple
\begin{equation}
    \mc_\paramset=(\state, s_0, \probmat^\paramset, AP, L)
\end{equation}
Where \state\ is a set of states, $s_0$ an initial state, AP a set of atomic propositions, $L:\state\rightarrow2^{AP}$ a labeling function, and $\probmat^\paramset$ a probability matrix such that $\probmat^\paramset_{s,s'}$ indicates the probability that the system transitions from state $s$ to $s'$. 
\end{definition}

In a parametric DTMC, the entries of $\probmat^\paramset$ can be rational numbers, members of \paramset, or expressions formed from the former two categories. For a parametric DTMC to be valid, there must exist an instantiation function $\kappa:\rational\cup \paramset\rightarrow[0,1]$ such that $\forall q\in\rational\;\kappa(q)=q$ and $\forall s\in\state\;\sum_{s'\in\state}\kappa(\probmat^\paramset_{s,s'})=1$. The resulting DTMC of applying $\kappa$ to the variables is called an \emph{instance} DTMC. 

\subsection{Property Specification}\label{sec:PCTL*}
In our analysis, we will primarily consider properties specified in PCTL*, an extension of Probabilistic Computation Tree Logic (PCTL), which is a formal logic that defines probabilistic properties on DTMCs. The PCTL* syntax is as follows~\cite{baier2008principles}.
\begin{definition}[Syntax of PCTL*]
    PCTL* \emph{state formulae} over the set AP of atomic propositions are formed according to the following grammar:
    \begin{equation}
        \Phi ::= true \; \mid \; a \; \mid \; \Phi_1\land\Phi_2 \; \mid \; \neg\Phi \; \mid \; \mathbb{P}_J(\varphi)
    \end{equation}
    where $a\in AP$, $\varphi$ is a path formula, and $J\subseteq [0,1]$ is an interval with rational bounds. PCTL* \emph{path formulae} are formed according to the following grammar:
    \begin{equation}
        \varphi ::= \Phi \; \mid \; \varphi_1\land\varphi_2  \; \mid \; \neg\varphi \; \mid \;\bigcirc\varphi \; \mid \; \varphi_1 \mathbf{U}\varphi_2 \; \mid \; \varphi_1 \mathbf{U}^{\leq k}\varphi_2
    \end{equation}
    where $\Phi$ is a PCTL* state formula.
\end{definition}
Using this syntax, we can derive the eventually ($\lozenge\Phi = true \mathbf{U}\Phi$) and bounded eventually ($\lozenge^{\leq k}\Phi = true \mathbf{U}^{\leq k}\Phi$) operators, as well as many other useful operators, like the global or always true operator ($\square\Phi=\neg\lozenge\neg\Phi$). We note that PCTL* subsumes PCTL whose definition is similar, but requires that any temporal operator be followed by a state formula, and restricts boolean combinations of path formulae. We omit the details of these definitions and refer the reader to Chapter 10 of \cite{baier2008principles} for more details.

\subsection{Computing Property Satisfaction Probabilities}\label{sec:background_satisfaction}
PCTL* introduces a probabilistic operator $\mathbb{P}_J(\varphi)$ where $\varphi$ is a path formula and $J\subseteq [0,1]$ is a probability interval. Semantically, given a state $s$ of a Markov chain $\mc$, we have that 
\begin{equation}
    s\models_\mc\mathbb{P}_J(\varphi) \iff Pr(s\models_\mc\varphi)\in J
\end{equation}
where $Pr(s \models_\mc \varphi)$ denotes the probability that s satisfies $\varphi$ in $\mc$. We refer to this measure as the \emph{satisfaction probability of $\varphi$ at s}. The satisfaction probability is defined by considering the set $S$ of all infinite traces starting from $s$ that satisfy $\varphi$, and then taking the probability measure of $S$.

Because we are analyzing the adversarial robustness of a system, we are interested in the extent to which small changes in the environment affect the utility of the system. In our case, the utility of the system is the probability that a property is satisfied by the paths starting in the initial state, $s_0$. Therefore, rather than focusing on the binary question of whether the PCTL* property is satisfied by the DTMC, we are interested in the satisfaction probabilities of certain path formulae. In the remainder of the paper, we utilize the direct computation of $Pr(s\models_\mc\varphi)$. State of the art probabilistic model checkers including PRISM~\cite{prism}, ISCASMC~\cite{ISCASMC} and Storm~\cite{storm} are capable of computing this probability directly. 

Parametric model checkers including PARAM~\cite{param} and PROPhESY~\cite{prophesy} are able to determine these probabilities for parametric DTMCs with respect to a given PCTL by solving the \emph{symbolic solution function synthesis} problem \cite{daws_symbolic_2005,abdulla_model_2011,param,junges_parameter_2019}.

\begin{problem}[Symbolic Solution Function Synthesis]
    Given parameter set $\paramset$, a pDTMC $\mc_{\paramset}$ with rational state transitions, and a PCTL property, $\varphi$, find the rational function $f$ over parameters in $\paramset$ such that 
    \begin{equation}
        f_{\varphi}(\paramset) = \prsat{\mc_{\paramset}}.
        \label{eqn:symbsol}
    \end{equation}
    \label{prob:rational_func}
\end{problem}

We denote the instantiation of the symbolic solution function to be $\kappa(f_\varphi(\paramset))$, which indicates the probability that $\varphi$ is satisfied by the instance DTMC resulting from applying instantiation function $\kappa$ to each variable in $\mc_{\paramset}$.

\section{Adversarial Robustness in DTMCs}\label{sec:adv_rob}
In this section, we introduce a general definition of adversarial robustness for DTMCs and discuss in detail a class of $\varepsilon,d$-bounded threat models. We also present three DTMC adversarial robustness problems and extend these definitions to include deterministic, memoryless policies acting in MDPs. Additionally, we provide a reduction to the Uncertain Markov Chain (UMC) model checking problem \cite{hermanns_model-checking_2006} for $\varepsilon,max$-bounded threat models.

We choose to model our problem using the definitions and conventions of probabilistic model checking to enable the use of state-of-the-art probabilistic model checking tools such as PRISM \cite{prism} and PARAM \cite{param} to develop an efficient solution. We note, however that we may be able to model the same problem using other formalisms including process calculi or a formalism related to the one proposed in Topcu \etal \cite{topcu2012}. We leave this analysis to future work.

\subsection{Robustness Definition}\label{sec:gen_rob}
We introduce a general definition of \emph{DTMC adversarial robustness} with respect to a given property. In a real-world scenario, adversaries are restricted in how much they can alter a given system, both by physical limitations and in order to avoid detection. We model an adversary who can modify the system within some predetermined constraints.

To model this system perturbation, we introduce the \emph{perturbation space}, $\pertset{\mc}$ of DTMC $\mc=(\state,s_0,\probmat,AP,L)$ which consists of all DTMCs of the form $(\state,s_0,\probmat',AP,L)$. In other words, DTMCs in the perturbation space have the same states (with the same labeling and initial state) as the original DTMC, but vary in their transition probability matrix. We refer to subsets $PS\subseteq\pertset{\mc}$ as \emph{perturbation sets}. Any transition which has more than one possible value in $PS$ is considered a \emph{vulnerable transition}. All perturbation sets must include the original DTMC, and the minimal perturbation set for $\mc$ is $\{\mc\}$. These perturbation sets provide constraints on the adversary's ability to modify the transitions in the system. Using this notion of perturbation constraints, we can now define general adversarial robustness for a DTMC.

\begin{definition}[DTMC Adversarial Robustness]
    Given $0\leq\delta\leq1$, DTMC \mc, perturbation set $PS\subseteq\pertset{\mc}$, and some PCTL$^*$ path formula $\varphi$, \mc\ is \emph{adversarially robust} with respect to $PS,\;\varphi,\; \delta$ if
    \begin{equation}
        \prsat{\mc'} \geq \prsat{\mc} - \delta
        \label{eqn:advrob}
    \end{equation}
    for all $\mc'\in PS$.
    \label{def:advrob}
\end{definition}

We choose to define DTMC adversarial robustness as (\ref{eqn:advrob}) because it highlights the relationship between the satisfaction probability in the original DTMC and perturbed DTMC. It can also be written as a PCTL* satisfaction relation.
\begin{remark}
    Inequality (\ref{eqn:advrob}) is semantically equivalent to $s_0 \models_{\mc'} \mathbb{P}_{\geq (p - \delta)}(\varphi)$ where $p=\prsat{\mc}$ for all $\mc'\in PS$.
    \label{rem:rob_redefine_pctl}
\end{remark}

\subsection{Threat Models}\label{sec:tms}
The power of Definition \ref{def:advrob} lies in the ability to specify an adversarial perturbation set which represents an explicit threat model. There are many potential methods to restrict the adversary, for example, in a \emph{structure-preserving} threat model, the adversary is only able to perturb probabilities of transitions which exist in the original system. Formally this can be expressed as $\forall\mc'\in PS\subseteq\pertset{\mc}$, $\probmat_{s,s'}=0 \implies \probmat'(s,s') = 0$.

Definition \ref{def:advrob} begets a natural class of threat models for which the perturbation set is described as a ball around the probability matrix of the original DTMC. We will call this class the \emph{$\varepsilon,d$-bounded threat models}. An adversary acting under an $\varepsilon,d$-bounded threat model has a fixed budget, $\varepsilon$, which bounds the distance between the original and perturbed DTMCs, measured by a distance function $d$ over the transition probability matrices. Formally, we say $\forall\mc'\in PS\subseteq\pertset{\mc},\;d(\probmat,\probmat')\leq\varepsilon$. This $\varepsilon$ bound models the amount an attacker can modify the system. This value could represent physical limitations, for example if an attacker is able to move a shelf or barrel in the warehouse robot example from Sec. \ref{sec:intro}. It can also reflect the amount an attacker could modify the system while avoiding detection. For example in the collision avoidance protocol from Sec. \ref{sec:intro}, an attacker who reduces the probability of a message send by $90\%$ would result in noticeable slowdowns in transmission, likely alerting a system administrator that something is amiss.

Because of this modular framework, explicit threat models can be defined using combinations of these restrictions. This allows the system designer to evaluate many different potential attacks in a methodical manner. For the purposes of this paper, we focus our evaluation on threat models which are bounded with respect to the entry-wise $max$ distance defined as
\begin{equation}
    ||\pertmat - \pertmat'||_{max} = \max_{i,j}(|\pertmat_{ij}-\pertmat_{ij}'|).
\end{equation}
This norm is equivalent to the $\ell_\infty$ norm over the vector representation of matrix \pertmat. The following four perturbation set definitions are examples of these $\varepsilon,max$-bounded threat models.

\subsubsection{Selected transitions (ST) threat model}
The ST threat model permits adversaries to perturb only selected transitions. For instance, in the Grid World example from Sec. \ref{sec:intro}, an ST adversary who is able to remove certain types of obstacles (e.g., barrels), but not others (e.g., shelves) can introduce new transitions to the grid from a selected set. For this threat model, we specify a set $\mathcal{T}\subseteq\state\times\state$ of vulnerable transitions, and define the perturbation set to be
\begin{align}
    PS_{\varepsilon,max,\mathcal{T}} = \{ \mc' :\; &||\probmat - \probmat'||_{max} \leq \varepsilon \text{ and}\\
    &\nonumber  \forall (s,s') \not\in\mathcal{T},\; \probmat'_{s,s'}=\probmat_{s,s'}\}.
\end{align}

\subsubsection{Structure-preserving selected transitions (SPST) threat model}
The SPST threat model is a more restrictive version of the ST threat model. SPST adversaries may perturb transitions from a vulnerable set whose values in the original DTMC are non-zero. In other words, adversaries under this threat model may attack any transition in the vulnerable set, but may not add transitions to the DTMC. For example, an SPST attacker for the Grid World example from Sec. \ref{sec:intro} may increase the chance that an obstacle such as another robot is in specific spaces on the floor, decreasing the probability of moving into those spaces. For this threat model, we specify a set $\mathcal{T}\subseteq\state\times\state$ of vulnerable transitions, and define the perturbation set to be
\begin{align}
    PS_{\varepsilon,max,\mathcal{T},pres} = \{ \mc' :\; &||\probmat - \probmat'||_{max} \leq \varepsilon \text{ and}\\
    &\nonumber  \forall (s,s') \not\in\mathcal{T}\; \probmat'_{s,s'}=\probmat_{s,s'} \text{and } \\
    &\nonumber\forall s,s' \in \state,\;\probmat_{s,s'}=0 \implies \probmat'_{s,s'}=0\}.
\end{align}

\subsubsection{Selected states (SS) threat model}
A real-world adversary may be able to gain control of certain states in a system, and therefore it is useful to model an adversary by the states they can control. For example, an SS adversary for the collision avoidance protocol from Fig. \ref{fig:backoff} in Sec. \ref{sec:intro} with control of the receiver's ability to send an acknowledgment can increase the probability the receiver moves to RCV\_ACK. This increases the probability of a dropped message, which was not possible in the original protocol. The SS threat model is a special case of ST threat model, wherein the vulnerable transitions are defined to be those emanating from a set of vulnerable states. For this threat model, we specify a set $\mathcal{V}\subseteq\state$ of vulnerable states, and define the perturbation set to be
\begin{align}
    PS_{\varepsilon,max,\mathcal{V}} = \{ \mc' :\; &||\probmat - \probmat'||_{max} \leq \varepsilon \text{ and}\\
    &\nonumber  \forall s \not\in\mathcal{V},s'\in\state,\; \probmat'_{s,s'}=\probmat_{s,s'}\}.
\end{align}

\subsubsection{Structure-preserving selected states (SPSS) threat model}
The SPSS threat model is a special case of the SPST threat model. Here, adversaries are permitted to attack transitions emanating from a set of vulnerable states, but may not add transitions to the original DTMC. For example, an SPSS attacker for the collision avoidance protocol from Fig. \ref{fig:backoff} in Sec. \ref{sec:intro} with access to a sender may increase the probability that the sender waits to send a message, thereby causing a longer expected delay in message sends and unnecessarily slowing down communication. For this threat model, we specify a set $\mathcal{V}\subseteq\state$ of vulnerable states, and define the perturbation set to be
\begin{align}
    PS_{\varepsilon,max,\mathcal{V},pres} = \{ \mc' :\; &||\probmat - \probmat'||_{max} \leq \varepsilon \text{ and}\\
    &\nonumber  \forall s \not\in\mathcal{V},s'\in\state,\; \probmat'_{s,s'}=\probmat_{s,s'} \text{and } \\
    &\nonumber\forall s,s' \in \state,\;\probmat_{s,s'}=0 \implies \probmat'_{s,s'}=0\}.
\end{align}

\subsubsection{Other Threat Models}
We specify the previous four threat models to provide a starting point for formally modeling specific attackers in real-world scenarios. However, our framework is extremely flexible and is capable of modeling a wide variety of other potential attackers. For instance, we can consider a threat model in which the adversary has a cumulative budget that they can use across all transitions, using an $\varepsilon,\ell_2$-bounded threat model. The ability to add a bounded number of states not present in the original model can be accomplished by generating a surrogate DTMC with a set of additional, unconnected states which represent unknown states. We leave analysis of these additional threat models to future work.

\subsection{Verification and Synthesis Problems}\label{sec:probs}
We now present three main problem definitions related to verification of robustness and attack synthesis in DTMCs.

\begin{problem}[Verification of Adversarial Robustness]
    Given $0\leq\delta\leq1$, DTMC \mc{}, PCTL* path formula $\varphi$, $PS\subseteq\pertset{\mc}$, determine whether \mc\ is adversarially robust with respect to $PS,\;\varphi,\;\delta$.\label{prob:verif}
\end{problem}

This definition leads to two immediate results.
\begin{remark}[Zero Case]
    Adversarial robustness with an $\varepsilon,d$-bounded adversary always holds for $\varepsilon=0,\;\delta=0$.
\end{remark}
\begin{remark}[Monotonicity]
    If DTMC \mc\ is adversarially robust with an $\varepsilon,d$-bounded adversary, it is adversarially robust with respect to an equivalently defined $\varepsilon',d$-bounded adversary (varying only in the value of $\varepsilon$) for $\varepsilon'\leq\varepsilon$.
    \label{rem:mono}
\end{remark}

We provide the adversarial robustness definition with respect to $\delta$ so that the system designer can set an acceptable threshold for performance. These thresholds can be defined at an institutional or legal level according to specific requirements. For example, in the collision avoidance protocol from Sec. \ref{sec:intro}, a product manager might mandate that the acceptable probability that a message is received within 20 time ticks is at least $90\%$, chosen based on customer data indicating how long someone is willing to wait for a transaction to occur. The $\delta$ variable encodes the difference between the system behavior assuming no attack and the acceptable threshold.

It is also reasonable that the system designer may also wish to evaluate the maximum damage an adversary could do given an attack budget, to help determine a reasonable $\delta$ threshold for the system. For this scenario, we define the maximal $\delta$ synthesis problem.

\begin{problem}[Maximal $\delta$ Synthesis]
    Given DTMC \mc, PCTL* path formula $\varphi$, and perturbation set $PS\subseteq\pertset{\mc}$, synthesize the maximal $\delta$ for which \mc\ is adversarially robust with respect to $PS,\;\varphi,\;\delta$.\label{prob:dsynth}
\end{problem}

In order to solve Problems \ref{prob:verif} and \ref{prob:dsynth} we solve a more general synthesis problem for a worst-case attack. This attack provides meaningful information on the transition probabilities which have the most impact on robustness. This feedback can be used to determine which elements of a system are most important to maintain overall utility and are thereby the most critical to be secured.

\begin{problem}[Worst Case Attack Synthesis]
    Given DTMC $\mc=(\state,s_0,\probmat,AP,L)$, PCTL$^*$ path formula $\varphi$, and perturbation set $PS\subseteq\pertset{\mc}$, find perturbation matrix $\pertmat$ such that $\mc^*=(\state,s_0,\probmat+\pertmat,AP,L)$ satisfies $\prsat{\mc^*}\leq \prsat{\mc'} \; \forall\mc'\in PS$.\label{prob:advex}
\end{problem}
It follows immediately that Problems \ref{prob:verif} and \ref{prob:dsynth} reduce to Problem \ref{prob:advex} because the result of Problem \ref{prob:advex} is the attack perturbation which minimizes the probability of satisfying the property.

\begin{lemma}[Problem \ref{prob:verif} reduces to Problem \ref{prob:advex}]
    For $\mc^*$ which is the solution to Problem \ref{prob:advex} with given $\mc,\;\varphi,\;PS$,  $\prsat{\mc^*}\geq\prsat{\mc}-\delta$ for a given $0\leq\delta\leq 1$ if and only if \mc\ is adversarially robust with respect to $PS,\;\varphi,\;\delta$.
    \label{lem:advsolvesverif}
\end{lemma}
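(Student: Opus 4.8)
The plan is to argue both directions of the biconditional by exploiting the defining property of $\mc^*$ from Problem \ref{prob:advex}, namely that $\prsat{\mc^*}\leq\prsat{\mc'}$ for every $\mc'\in PS$. In other words, $\mc^*$ realizes the minimum satisfaction probability over the entire perturbation set (note $\mc^*\in PS$ itself, since $PS\subseteq\pertset{\mc}$ contains it by construction as the optimizer; if one prefers, one can observe that the minimum is attained and $\mc^*$ witnesses it). The key observation is that Definition \ref{def:advrob} is a universally quantified statement over $PS$, and a universally quantified lower bound is equivalent to a lower bound on the infimum.

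For the forward direction, suppose $\prsat{\mc^*}\geq\prsat{\mc}-\delta$. Take any $\mc'\in PS$. By the minimality property of $\mc^*$ we have $\prsat{\mc'}\geq\prsat{\mc^*}\geq\prsat{\mc}-\delta$, which is exactly inequality (\ref{eqn:advrob}) for $\mc'$. Since $\mc'$ was arbitrary, $\mc$ is adversarially robust with respect to $PS,\;\varphi,\;\delta$ by Definition \ref{def:advrob}.

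For the reverse direction, suppose $\mc$ is adversarially robust with respect to $PS,\;\varphi,\;\delta$. Then (\ref{eqn:advrob}) holds for \emph{all} $\mc'\in PS$; in particular it holds for the specific choice $\mc'=\mc^*$, giving $\prsat{\mc^*}\geq\prsat{\mc}-\delta$. This completes the proof.

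I do not expect a genuine obstacle here: the lemma is essentially the tautology ``$\forall x\, P(x) \iff P(x^\ast)$ when $x^\ast$ is a minimizer of the quantity being bounded.'' The only point requiring a word of care is making explicit that $\mc^*\in PS$ (so that the adversarial robustness hypothesis can be instantiated at it) and that $\mc^*$ attains the minimum over $PS$ — both of which are immediate from the statement of Problem \ref{prob:advex}. If one wants to be fully rigorous about existence of the minimizer, one could remark that $PS$ for the $\varepsilon,max$-bounded threat models is compact and $\prsat{\cdot}$ is continuous in the transition probabilities, but this is not needed for the logical equivalence itself.
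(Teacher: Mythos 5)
Your proof is correct and matches the paper's reasoning: the paper treats this lemma as immediate (``the result of Problem \ref{prob:advex} is a minimizing attack perturbation'') and gives no explicit proof, and your two-direction argument --- minimality of $\mc^*$ for the forward implication, instantiating the universal quantifier at $\mc^*\in PS$ for the reverse --- is exactly the intended justification. Your side remarks about $\mc^*\in PS$ and attainment of the minimum are appropriate and, if anything, slightly more careful than the paper itself.
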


\begin{proof}
    We use the solution to Problem \ref{prob:advex} to compute $\prsat{\mc^*}$. If $\prsat{\mc^*}\geq\prsat{\mc}-\delta$, $\mc$ is adversarially robust with respect to $PS,\; \varphi,\; \delta$.
\end{proof}

\begin{lemma}[Problem \ref{prob:dsynth} reduces to Problem \ref{prob:advex}]
    For $\mc^*$ which is the solution to Problem \ref{prob:advex} with given $\mc,\;\varphi,\;PS$,
    \begin{equation}
        \delta^*::=\prsat{\mc}-\prsat{\mc^*}
    \end{equation}
    is the maximal $\delta$ for which \mc\ is adversarially robust with respect to $PS,\;\varphi,\;\delta$.
    \label{lem:advsolvesmindelta}
\end{lemma}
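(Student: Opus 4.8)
The plan is to unpack ``maximal'' into two claims and verify each from the defining property of the worst-case attack $\mc^*$ in Problem~\ref{prob:advex}, namely $\prsat{\mc^*}\leq\prsat{\mc'}$ for all $\mc'\in PS$. Claim one: $\mc$ is adversarially robust with respect to $PS,\varphi,\delta^*$. Claim two: for every $\delta$ with $0\leq\delta<\delta^*$, $\mc$ is \emph{not} adversarially robust with respect to $PS,\varphi,\delta$. Together these say $\delta^*$ is the largest admissible bound, and since $\delta^*$ itself works, the maximum is attained (so speaking of a maximum rather than a supremum is justified).

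First I would check that $\delta^*\in[0,1]$ so the statement is well posed. Every perturbation set contains the original DTMC, i.e. $\mc\in PS$, so applying the defining inequality of $\mc^*$ to $\mc'=\mc$ gives $\prsat{\mc^*}\leq\prsat{\mc}$, hence $\delta^*\geq0$; and since satisfaction probabilities lie in $[0,1]$ we get $\delta^*=\prsat{\mc}-\prsat{\mc^*}\leq\prsat{\mc}\leq1$. For claim one, note that for any $\mc'\in PS$ we have $\prsat{\mc'}\geq\prsat{\mc^*}=\prsat{\mc}-\delta^*$, which is precisely inequality~(\ref{eqn:advrob}) with $\delta=\delta^*$; equivalently, this is an immediate instance of Lemma~\ref{lem:advsolvesverif} applied with $\delta=\delta^*$. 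For claim two, suppose toward a contradiction that $\mc$ is adversarially robust with respect to $PS,\varphi,\delta$ for some $\delta<\delta^*$. Instantiating~(\ref{eqn:advrob}) at $\mc'=\mc^*\in PS$ yields $\prsat{\mc^*}\geq\prsat{\mc}-\delta>\prsat{\mc}-\delta^*=\prsat{\mc^*}$, a contradiction; so no such $\delta$ exists.

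There is no substantive obstacle here: the argument is a short two-line deduction from Problem~\ref{prob:advex}'s defining property plus $\mc\in PS$. The only points requiring care are (i) confirming $\delta^*$ is a legal bound, i.e. lies in $[0,1]$, which uses $\mc\in PS$ and boundedness of probabilities; (ii) that $\mc^*$ is assumed to exist as the hypothesis supplies it as a solution to Problem~\ref{prob:advex}; and (iii) observing that because $\delta^*$ is itself admissible, ``maximal'' is literally achieved and not merely approached. I would keep the write-up to these three short steps.
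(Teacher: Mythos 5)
The paper offers no explicit proof of this lemma---it is dispatched with ``it follows immediately \dots because the result of Problem~\ref{prob:advex} is a minimizing attack perturbation''---so your write-up is a fleshing-out of an argument the authors left implicit, and the mathematical content of your two claims is correct: claim one follows from $\prsat{\mc'}\geq\prsat{\mc^*}=\prsat{\mc}-\delta^*$ for all $\mc'\in PS$, claim two from instantiating~(\ref{eqn:advrob}) at $\mc^*\in PS$, and your well-posedness check correctly uses the paper's stipulation that every perturbation set contains $\mc$.

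There is, however, one logical slip in how you glue the claims together. You write that claims one and two ``together say $\delta^*$ is the largest admissible bound,'' but they say the opposite: ``$\delta^*$ is admissible'' plus ``nothing below $\delta^*$ is admissible'' characterizes $\delta^*$ as the \emph{minimum} of the set of admissible $\delta$, not the maximum. Indeed, robustness in the sense of Definition~\ref{def:advrob} is monotone \emph{increasing} in $\delta$ (enlarging $\delta$ weakens inequality~(\ref{eqn:advrob})), so the set of admissible $\delta$ is the interval $[\delta^*,1]$ and its literal maximum is $1$, which would make the lemma as worded either trivial or false. The charitable reading---supported by the paper's own label \texttt{advsolvesmindelta} and by the phrase ``minimizing attack perturbation''---is that ``maximal $\delta$'' means the tightest bound, i.e.\ the maximal degradation the adversary can force, which is exactly the minimal admissible $\delta$. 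Your two claims prove precisely that, so the fix is only to the concluding sentence: state that $\delta^*$ is the least $\delta$ satisfying Definition~\ref{def:advrob} (equivalently, the largest achievable drop $\prsat{\mc}-\prsat{\mc'}$ over $\mc'\in PS$), and, if you wish, flag the mismatch with the paper's wording rather than silently inheriting it.
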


\begin{proof}
    We use the solution from Problem \ref{prob:advex} to compute $\prsat{\mc^*}$. Then we have that $\delta^*::=\prsat{\mc}-\prsat{\mc^*}$ is the solution to Problem \ref{prob:dsynth}.
\end{proof}

\subsection{Reduction to the UMC Model Checking Problem}\label{sec:umc}
The uncertain Markov chain (UMC) model checking problem is to determine whether all DTMCs in an infinite set defined by an interval on the entries of the transition probability matrix satisfy a PCTL state formula~\cite{hermanns_model-checking_2006}. In this section we show that one of our three problems (DTMC adversarial robustness verification, Problem \ref{prob:verif}) reduces to the uncertain Markov chain (UMC) model checking problem for $\varepsilon,max$-bounded threat models and PCTL properties. We first revisit background on interval-valued Markov chains and the UMC model checking problem~\cite{hermanns_model-checking_2006}.
\begin{definition}[Interval-Valued Discrete Time Markov Chain (IDTMC)]
    An \emph{interval-valued discrete time Markov chain} is a tuple 
    \begin{equation}
        \mathcal{I}=(\state,s_0,\lbprobmat,\ubprobmat,AP,L)   
    \end{equation}
    where $\state$ is a finite set of states with $s_0$ initial state. $\lbprobmat$ is a transition probability matrix for which $\lbprobmat_{s,s'}$ gives the lower bound of the transition probability from state $s$ to state $s'$. $\ubprobmat$ similarly is the upper bound transition probability matrix. $AP$ is a set of atomic propositions and $L:\state\rightarrow2^{AP}$ is a labeling function.
\end{definition}
  $[\mathcal{I}]$ is the infinite set of DTMCs $(\state,s_0,\probmat,AP,L)$ such that $\lbprobmat_{s,s'}\leq\probmat_{s,s'}\leq\ubprobmat_{s,s'}\;\forall s,s'\in\state$. For the uncertain Markov chain semantics of IDTMC $\mathcal{I}$, we assume that an external environment non-deterministically picks a single DTMC from $[\mathcal{I}]$ at the beginning. Now we revisit the UMC model checking problem~\cite{hermanns_model-checking_2006}.
\begin{problem}[UMC Model Checking Problem]
    Given a IDTMC $\mathcal{I}$ and PCTL state formula $\Phi$, $s_0\models_{\mathcal{I}} \Phi$ if and only if, $\forall\mc\in[\mathcal{I}]$, $s_0\models_{\mc}\Phi$.
    \label{prob:UMC}
\end{problem}
In other words, $[\mathcal{I}]$ does not satisfy $\Phi$ if there exists a DTMC $\mc$ in $[\mathcal{I}]$ such that $\mc$ does not satisfy $\Phi$. 

We now examine how the adversarial robustness verification problem (Problem \ref{prob:verif}) relates to the UMC model checking problem (Problem \ref{prob:UMC}).
\begin{lemma}
    Given DTMC \mc, $\varepsilon,max$-bounded perturbation set $PS$, and $0\leq\delta\leq 1$, we can define IDTMC $\mathcal{I}$, such that, for the DTMC \mc, 
    \begin{equation}
        \mc\in[\mathcal{I}]\iff\mc\in PS
    \end{equation}
    \label{lem:idtmc_ps}
\end{lemma}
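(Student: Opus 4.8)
The plan is to construct the IDTMC $\mathcal{I}$ explicitly from the data defining $PS$ and then verify, entry by entry, that the constraints cutting out $[\mathcal{I}]$ coincide with those cutting out $PS$; the content of the claim is really that $[\mathcal{I}]=PS$ as sets of DTMCs over the state space of $\mc$ (read literally with $\mc$ the original DTMC both sides are trivially true, so this is the intended reading), and the parameter $\delta$ plays no role -- it is carried along only because Problem~\ref{prob:verif} mentions it. Every $\varepsilon,max$-bounded perturbation set among the four of Section~\ref{sec:tms} is determined by a set $\mathcal{T}\subseteq\state\times\state$ of vulnerable transitions (for the SS and SPSS models, $\mathcal{T}=\{(s,s'):s\in\attackverts\}$), together with the bound $||\probmat-\probmat'||_{max}\le\varepsilon$ and, in the structure-preserving variants, the extra clause $\probmat_{s,s'}=0\implies\probmat'_{s,s'}=0$.

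Concretely I would define $\mathcal{I}=(\state,s_0,\lbprobmat,\ubprobmat,AP,L)$ with the same $\state,s_0,AP,L$ as $\mc$ and set the bounds as follows. For $(s,s')\notin\mathcal{T}$ (a frozen transition) put $\lbprobmat_{s,s'}=\ubprobmat_{s,s'}=\probmat_{s,s'}$. For $(s,s')\in\mathcal{T}$ put $\lbprobmat_{s,s'}=\max(0,\probmat_{s,s'}-\varepsilon)$ and $\ubprobmat_{s,s'}=\min(1,\probmat_{s,s'}+\varepsilon)$, except that in the structure-preserving variants, if also $\probmat_{s,s'}=0$, override this with $\lbprobmat_{s,s'}=\ubprobmat_{s,s'}=0$. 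Since $\mc\in[\mathcal{I}]$ by construction, $[\mathcal{I}]$ is nonempty, and $\sum_{s'}\lbprobmat_{s,s'}\le 1\le\sum_{s'}\ubprobmat_{s,s'}$ for every $s$, so $\mathcal{I}$ is a well-formed IDTMC.

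For the equivalence, fix an arbitrary DTMC $\mc'=(\state,s_0,\probmat',AP,L)$ over the same state space; since both $PS$ and $[\mathcal{I}]$ consist only of genuine DTMCs, row-stochasticity of $\probmat'$ holds on both sides and needs no separate tracking. Now $\mc'\in[\mathcal{I}]$ iff $\lbprobmat_{s,s'}\le\probmat'_{s,s'}\le\ubprobmat_{s,s'}$ for all $(s,s')$. On frozen entries this reads $\probmat'_{s,s'}=\probmat_{s,s'}$, matching the ``$\forall(s,s')\notin\mathcal{T}$'' clause of $PS$. On a vulnerable, non-structure-frozen entry, because $0\le\probmat'_{s,s'}\le 1$ holds automatically, the two interval inequalities are equivalent to $|\probmat'_{s,s'}-\probmat_{s,s'}|\le\varepsilon$; conjoining over all entries (frozen ones contribute $0$ to the maximum) gives exactly $||\probmat-\probmat'||_{max}\le\varepsilon$. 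On a vulnerable entry with $\probmat_{s,s'}=0$ in a structure-preserving variant, both sides force $\probmat'_{s,s'}=0$, matching the preservation clause. Hence $\mc'\in[\mathcal{I}]\iff\mc'\in PS$, as required.

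The step I would be most careful about is the clamping of the endpoints to $[0,1]$: this is harmless precisely because DTMC entries are confined to $[0,1]$ anyway, so replacing $\probmat_{s,s'}\pm\varepsilon$ by $\max(0,\cdot)$ and $\min(1,\cdot)$ only discards points that were already infeasible, and it is what makes the interval constraint coincide with $|\probmat'_{s,s'}-\probmat_{s,s'}|\le\varepsilon$ rather than being strictly stronger. The remaining work is routine bookkeeping: checking that the single construction above specializes correctly to all four named threat models once $\mathcal{T}$ is instantiated, and confirming that the row-sum conditions on the two sides line up so that no consistency hypothesis on $\mathcal{I}$ beyond the nonemptiness witnessed by $\mc$ is needed.
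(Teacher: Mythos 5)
Your proposal is correct and takes essentially the same approach as the paper: the paper constructs the same IDTMC, writing the clamped bounds as $\ubprobmat_{s,s'}=\probmat_{s,s'}+\min(\varepsilon,1-\probmat_{s,s'})$ and $\lbprobmat_{s,s'}=\probmat_{s,s'}-\min(\varepsilon,\probmat_{s,s'})$ on vulnerable transitions and freezing the rest, then checks both inclusions. If anything, your treatment is slightly more careful than the paper's, since you explicitly collapse the interval to $[0,0]$ on zero-probability vulnerable transitions in the structure-preserving variants, a clause the paper's proof silently omits.
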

\begin{proof}
    Assume we have DTMC $\mc=(\state,s_0,\probmat,AP,L)$, $0\leq\delta\leq1$, and $\varepsilon,max$-bounded perturbation set $PS\subseteq\pertset{\mc}$ with $0\leq\varepsilon\leq1$ and set of vulnerable transitions $\mathcal{V}\subseteq\state\times\state$. We then construct IDTMC $\mathcal{I}=(\state,s_0,\lbprobmat,\ubprobmat,AP,L)$ where $\ubprobmat_{s,s'}=\probmat_{s,s'}+min(\varepsilon,1-\probmat_{s,s'})$ and $\lbprobmat_{s,s'}=\probmat_{s,s'}-min(\varepsilon,\probmat_{s,s'})$ for all $(s,s')\in\mathcal{V}$, and $\ubprobmat_{s,s'}=\lbprobmat_{s,s'}=\probmat_{s,s'}$ for all $(s,s')\not\in\mathcal{V}$. Consider $\mc'=(\state,s_0,\probmat',AP,L)\in[\mathcal{I}^*]$. By definition, $||\probmat-\probmat'||_{max}\leq\varepsilon$ and  $\probmat'=\probmat_{s,s'}=\probmat_{s,s'}\;\forall (s,s')\not\in\mathcal{V}$. By definition of $\mathcal{I}^*$, we also have that $\forall s\in\state,\;\sum_{s'\in\state}\probmat'_{s,s'}=1$ and $\forall s,s'\in\state,\;0\leq\probmat'_{s,s'}\leq 1$. By the definition of the $\varepsilon,max$-bounded threat model, we have that $\mc'\in PS$
    
    Now consider  $\mc'=(\state,s_0,\probmat'=\probmat+\pertmat,AP,L)\in PS$. 
    Since $PS\subseteq\pertset{\mc}$, we know that $\forall s\in\state,\;\sum_{s'\in\state}\probmat'_{s,s'}=1$.
    $PS\subseteq\pertset{\mc}$ also implies that $0\leq\probmat'_{s,s'}\leq 1\;\forall s,s'\in\state$. Because $\mc'\in PS$ where $PS$ is an $\varepsilon,max$-bounded perturbation matrix, $\pertmat'$ is a valid perturbation matrix. We also have that $\lbprobmat_{s,s'}\leq\probmat'_{s,s'}
    \leq\ubprobmat_{s,s'}\; \forall s,s'\in\state$. By definition of $\mathcal{I}^*$ we have that $\mc'\in[\mathcal{I}^*]$
\end{proof}
\begin{thm}
    Verification of DTMC adversarial robustness for an $\varepsilon,max$-bounded adversary and a PCTL path formula, reduces to the Uncertain Markov Chain model checking problem.
    \label{thm:umc_reduction}
\end{thm}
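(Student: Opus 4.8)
The plan is to combine Lemma \ref{lem:advsolvesverif} (robustness verification reduces to worst-case attack synthesis), Remark \ref{rem:rob_redefine_pctl} (robustness can be stated as a PCTL* satisfaction relation), and Lemma \ref{lem:idtmc_ps} (the $\varepsilon,max$-bounded perturbation set equals the set of DTMCs induced by a suitable IDTMC) into a single reduction. The key observation is that, once we fix $p = \prsat{\mc}$ and $\delta$, Definition \ref{def:advrob} says precisely that \emph{every} $\mc' \in PS$ satisfies $s_0 \models_{\mc'} \mathbb{P}_{\geq(p-\delta)}(\varphi)$; this is a universally quantified PCTL check over a set of DTMCs, which is exactly the shape of the UMC model checking problem.

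First I would spell out the reduction algorithm. Given an instance of Problem \ref{prob:verif} — a DTMC $\mc$, a PCTL path formula $\varphi$, an $\varepsilon,max$-bounded perturbation set $PS$, and a bound $\delta$ — I would (i) compute $p = \prsat{\mc}$ by ordinary probabilistic model checking on $\mc$ (this is a single finite DTMC, so it is computable and, in fact, yields a rational number); (ii) form the PCTL state formula $\Phi := \mathbb{P}_{\geq(p-\delta)}(\varphi)$, noting that $p-\delta$ is rational so $\Phi$ is a legal PCTL state formula with rational interval bound; and (iii) construct the IDTMC $\mathcal{I}$ exactly as in the proof of Lemma \ref{lem:idtmc_ps}, with $\ubprobmat_{s,s'} = \probmat_{s,s'} + \min(\varepsilon, 1-\probmat_{s,s'})$ and $\lbprobmat_{s,s'} = \probmat_{s,s'} - \min(\varepsilon,\probmat_{s,s'})$ on vulnerable transitions and $\lbprobmat_{s,s'} = \ubprobmat_{s,s'} = \probmat_{s,s'}$ elsewhere. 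The output of the reduction is the UMC instance $(\mathcal{I}, \Phi)$.

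Next I would prove correctness, i.e. that $s_0 \models_{\mathcal{I}} \Phi$ (in the UMC semantics of Problem \ref{prob:UMC}) if and only if $\mc$ is adversarially robust w.r.t.\ $PS, \varphi, \delta$. By Problem \ref{prob:UMC}, $s_0 \models_{\mathcal{I}} \Phi$ iff $s_0 \models_{\mc'} \Phi$ for all $\mc' \in [\mathcal{I}]$. By Lemma \ref{lem:idtmc_ps}, $[\mathcal{I}] = PS$ (the lemma gives $\mc' \in [\mathcal{I}] \iff \mc' \in PS$ for arbitrary $\mc'$ in the perturbation space), so this is equivalent to $s_0 \models_{\mc'} \mathbb{P}_{\geq(p-\delta)}(\varphi)$ for all $\mc' \in PS$. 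By Remark \ref{rem:rob_redefine_pctl}, that last condition is exactly adversarial robustness of $\mc$ w.r.t.\ $PS, \varphi, \delta$. Since each of the three construction steps is effective, this establishes a many-one reduction.

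The main obstacle I anticipate is a mismatch of scope between the two problems rather than any hard calculation. PCTL$^*$ path formulae $\varphi$ (as allowed in Definition \ref{def:advrob} and Problem \ref{prob:verif}) are more general than what appears inside a PCTL state formula, so I need to restrict the theorem statement to PCTL path formulae $\varphi$ for which $\mathbb{P}_{\geq(p-\delta)}(\varphi)$ is a genuine PCTL state formula — the theorem is already phrased this way ("a PCTL path formula"), so this is a matter of being careful, not a gap. A second subtlety worth a sentence is that Lemma \ref{lem:idtmc_ps} as stated reads $\mc \in [\mathcal{I}] \iff \mc \in PS$ for a single DTMC; I would note that its proof in fact establishes the set equality $[\mathcal{I}] = PS$, which is what the reduction needs. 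Finally I should remark that the ST threat model used in Lemma \ref{lem:idtmc_ps} (with vulnerable transition set $\mathcal{T}$) subsumes the SS model, and that for structure-preserving variants one simply restricts $\mathcal{T}$ (or $\mathcal{V}$) to transitions with nonzero original probability, so the same construction covers all four threat models of Section \ref{sec:tms} — no change to the argument is required.
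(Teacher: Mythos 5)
Your reduction is correct and is essentially the paper's own proof: both construct the IDTMC $\mathcal{I}$ via Lemma \ref{lem:idtmc_ps}, form the state formula $\mathbb{P}_{\geq(p-\delta)}(\varphi)$ with $p=\prsat{\mc}$, and chain the UMC semantics, the set equality $[\mathcal{I}]=PS$, and Remark \ref{rem:rob_redefine_pctl} to get the biconditional. Your added remarks on the rationality of $p-\delta$ and on Lemma \ref{lem:idtmc_ps} really giving a set equality are careful touches (and note that Lemma \ref{lem:advsolvesverif}, mentioned in your opening plan, plays no role in the final argument), but they do not change the route.
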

\begin{proof}
    Assume we have $0\leq\delta\leq1$, DTMC $\mc=(\state,s_0,\probmat,AP,L)$, PCTL path formula $\varphi$, and $\varepsilon,max$-bounded threat model. Assume also we have algorithm $A$ which takes as input IDTMC $\mathcal{I}$ and PCTL state formula $\Phi$ and returns $s_0\models_\mathcal{I}\Phi$.
    
    Using Lemma \ref{lem:idtmc_ps}, we define IDTMC $\mathcal{I}^*$ such that $\mc\in[\mathcal{I}^*]\iff\mc\in PS$. We also define PCTL state formula $\Phi^*=\mathbb{P}_{\geq (p - \delta)}(\varphi)$ where $p=\prsat{\mc}$. We run algorithm $A$ on $\mathcal{I}^*,\Phi^*$. If $A(\mathcal{I^*},\Phi^*)=true$, we know that $\nexists \mc \in [\mathcal{I}^*]$ such that $s_0\models_{\mc}\Phi^*$. If $A(\mathcal{I^*},\Phi^*)=false$, we know that $\exists \mc \in [\mathcal{I}^*]$ such that $s_0\models_{\mc}\Phi^*$.
    Therefore, by Remark \ref{rem:rob_redefine_pctl} and Lemma \ref{lem:idtmc_ps}, we have that \mc\ is adversarially robust with respect to $PS,\;\varphi,\;\delta$ if and only if $A(\mathcal{I^*},\Phi^*)$.
\end{proof}

\subsection{Policy Robustness in MDPs}\label{sec:DTMCtoMDP}
Our definition of adversarial robustness for DTMCs is easily extended to agents acting in MDPs under deterministic, memoryless policies. In this case, the adversary is able to perturb the transition probabilities in the underlying MDP, with the goal of decreasing the satisfaction probability of the DTMC formed by composing the policy and MDP. In this case we define the perturbation space, $\pertset{\mdp}$ of MDP $\mdp=(\state,s_0,\action,\transition,AP,L)$ to be the set of all MDPs of the form $(\state,s_0,\action,\transition{}',AP,L)$. This extension is useful when considering systems modeled as agents acting in probabilistic environments, as is the case for reinforcement learning and our warehouse robot example from Sec. \ref{sec:intro}.

\begin{definition}[Policy Adversarial Robustness]
    Given $0\leq\delta\leq1$, MDP  \mdp{}, perturbation set $PS\subseteq\pertset{\mdp}$, and some PCTL* path formula $\varphi$, a policy $\sigma$ is \emph{adversarially robust} in \mdp{} with respect to $PS,\;\varphi,\;\delta$ if, for the DTMC $\mdp_\sigma$,
    \begin{equation}
        \prsat{\mdp'_{\sigma}} \geq \prsat{\mdp_{\sigma}} - \delta
    \end{equation}
    for all $\mdp'\in PS$.
\end{definition}

We similarly define the three DTMC adversarial robustness problems from Section \ref{sec:probs} for policies acting in MDPs. Though the adversary may modify any vulnerable transition in the MDP, the adversary should only perturb the transition probabilities which correspond to the possible transitions in the resulting DTMC. It is enough to consider the state transition probabilities in the underlying MDP which are present in the composed DTMC. Therefore, \pertset{\mdp} and \pertset{\mdp_\sigma} are equivalent with respect to policy adversarial robustness for given $\mdp,\;\sigma$.

\begin{lemma}[Robustly Equivalent MDPs]
    Given a deterministic, memoryless policy $\sigma$, $\mdp=(\state,s_0,T,AP,L)$, perturbation set $PS\subseteq\pertset{\mdp}$, PCTL$^*$ path formula $\varphi$, and $\mdp^*=(\state,s_0,T^*,AP,L)$ such that $T(s,a,s')=T^*(s,a,s')$ for all $a\in\action,s\in\state$ where $\sigma(s)=a$, we have that $\sigma$ is adversarially robust in $\mdp$ with respect to $PS,\;\varphi,\;\delta$ if and only if it is adversarially robust in $\mdp^*$ with respect to $PS,\;\varphi\;\delta$.
    \label{lem:DTMC_trans}
\end{lemma}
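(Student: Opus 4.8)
The plan is to reduce the claim to two observations: that $\mdp$ and $\mdp^*$ have the same perturbation space, and that they induce the \emph{same} DTMC under $\sigma$, after which the two robustness conditions become syntactically identical.

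First I would record that $\pertset{\mdp}=\pertset{\mdp^*}$. By definition, the perturbation space of an MDP is the set of all MDPs obtained by replacing only the transition function, so it depends on $\state$, $s_0$, $\action$, $AP$, and $L$ but not on $T$. Since $\mdp$ and $\mdp^*$ agree on all of these components, their perturbation spaces coincide, so any $PS\subseteq\pertset{\mdp}$ is equally a subset of $\pertset{\mdp^*}$; in particular, ``adversarially robust in $\mdp^*$ with respect to $PS,\varphi,\delta$'' is well-posed and quantifies over the same family of perturbed MDPs as the statement about $\mdp$.

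The key step is to show $\mdp_\sigma=\mdp^*_\sigma$. By Definition~\ref{def:DTMCofMDP}, $\mdp_\sigma=(\state,s_0,\probmat,AP,L)$ with $\probmat_{s,s'}=T(s,\sigma(s),s')$, and $\mdp^*_\sigma=(\state,s_0,\probmat^*,AP,L)$ with $\probmat^*_{s,s'}=T^*(s,\sigma(s),s')$. The hypothesis gives $T(s,a,s')=T^*(s,a,s')$ for every $s,s'\in\state$ and every $a$ with $\sigma(s)=a$; instantiating $a:=\sigma(s)$ for each state $s$ yields $\probmat_{s,s'}=\probmat^*_{s,s'}$ for all $s,s'\in\state$. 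The remaining four components are identical by construction, so $\mdp_\sigma=\mdp^*_\sigma$ as DTMCs, and therefore $\prsat{\mdp_\sigma}=\prsat{\mdp^*_\sigma}$, since the satisfaction probability depends only on the DTMC.

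Finally I would unfold the definition of policy adversarial robustness on both sides. The policy $\sigma$ is adversarially robust in $\mdp$ with respect to $PS,\varphi,\delta$ exactly when $\prsat{\mdp'_\sigma}\geq\prsat{\mdp_\sigma}-\delta$ for all $\mdp'\in PS$; because the quantification ranges over the same set $PS$ and builds the same induced DTMCs $\mdp'_\sigma$ in both cases, substituting $\prsat{\mdp_\sigma}=\prsat{\mdp^*_\sigma}$ turns this into $\prsat{\mdp'_\sigma}\geq\prsat{\mdp^*_\sigma}-\delta$ for all $\mdp'\in PS$, which is precisely adversarial robustness of $\sigma$ in $\mdp^*$ with respect to $PS,\varphi,\delta$; the equivalence follows. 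The only point requiring any care — and thus the ``main obstacle,'' such as it is — is confirming that the two robustness statements quantify over a common perturbation set and common perturbed induced DTMCs, so that their sole difference is the reference probability $\prsat{\mdp_\sigma}$ versus $\prsat{\mdp^*_\sigma}$, which the key step shows to be equal.
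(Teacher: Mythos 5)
Your proposal is correct and follows essentially the same route as the paper's own (much terser) proof: both arguments reduce to the single observation that, by Definition~\ref{def:DTMCofMDP}, $\mdp_\sigma=\mdp^*_\sigma$, after which the two robustness conditions coincide. Your additional checks --- that $\pertset{\mdp}=\pertset{\mdp^*}$ so the quantification over $PS$ is well-posed, and that the two conditions differ only in the reference probability --- are details the paper leaves implicit, not a different argument.
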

\begin{proof}
    Assume we have deterministic, memoryless policy $\sigma$, $\mdp=(\state,s_0,T,AP,L)$, perturbation set $PS\subseteq\pertset{\mdp}$, PCTL* path formula $\varphi$, and $\mdp^*=(\state,s_0,T^*,AP,L)$ such that $T(s,a,s')=T^*(s,a,s')$ for all $a\in\action,s\in\state$ where $\sigma(s)=a$. By Definition \ref{def:DTMCofMDP}, $\mdp^*_\sigma=\mdp_\sigma$. Therefore, we have that $\sigma$ is robust in $\mdp$ with respect to $PS,\;\varphi,\;\delta$ if and only if it is robust in $\mdp^*$ with respect to $PS,\;\varphi,\;\delta$.
\end{proof}

It immediately follows that the three adversarial robustness problems for policies acting in MDPs reduce to the three adversarial robustness problems for DTMCs, respectively.

We note that this definition could also be extended such that the attacker can modify the policy, rather than the transition probabilities in the underlying MDP. However, this would be a discrete modification, changing from one action in a finite set to another. In the composed DTMC, this would result in a similarly discrete modification from one set of state transition probabilities to another, rather than a continuous change as is the case when you hold the policy constant. We choose to focus on the more interesting continuous scenario for the purposes of this paper.
\section{Optimization Solutions to the DTMC Adversarial Robustness Problems}\label{sec:solns}
In the following subsections, we reformulate the worst-case attack synthesis problem (Problem \ref{prob:advex}) as an optimization problem and provide two methods to solve the optimization using existing probabilistic model checking tools. We define these methods as \emph{optimization with direct computation} and \emph{optimization with symbolic solution function}. Due to the reductions in Lemmas \ref{lem:advsolvesverif} and \ref{lem:advsolvesmindelta}, these methods will additionally solve the other two DTMC robustness problems (Problems \ref{prob:verif} and \ref{prob:dsynth}). In Section \ref{sec:eval} we evaluate and compare the two approaches over case studies of varying state size and number of parameters.

\subsection{Perturbation Matrix Synthesis as an Optimization}\label{sec:opt_sol}
We present a constrained optimization over the perturbation matrix $\pertmat$ to find the worst case attack.
\begin{problem}[Perturbation Matrix Synthesis]
    Given DTMC $\mc=(\state,s_0,\probmat,AP,L)$, perturbation set $PS\subseteq\pertset{\mc}$, and some PCTL$^*$ path formula $\varphi$, find a perturbation matrix $\pertmat^*$ which solves
    \begin{align}
        \mathop{argmin}_{\pertmat}\;&\prsat{\mc'}\label{eqn:objective} \\
        \nonumber subject\; to \;& \mc'=(\state,s_0,\probmat+\pertmat,AP,L)\in PS
    \end{align}
    \label{prob:opt}
\end{problem}

\begin{lemma}
    Given DTMC $\mc=(\state,s_0,\probmat,AP,L)$, perturbation set $PS\subseteq\pertset{\mc}$, and some PCTL$^*$ path formula $\varphi$, $\pertmat^*$ which solves Problem \ref{prob:opt} is the worst case attack for \mc\ with respect to $PS,\;\varphi$.
\end{lemma}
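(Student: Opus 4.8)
The statement is essentially an unpacking of the two problem definitions, so the plan is to exhibit the feasible set of Problem \ref{prob:opt} as being in exact correspondence with $PS$, and then observe that ``minimizing the objective'' is literally the defining property of a worst case attack in Problem \ref{prob:advex}.

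First I would set up the correspondence. Recall that $PS \subseteq \pertset{\mc}$, and every element of $\pertset{\mc}$ has the form $(\state, s_0, \probmat', AP, L)$ for some stochastic matrix $\probmat'$. Hence the map $\pertmat \mapsto (\state, s_0, \probmat + \pertmat, AP, L)$ sends the feasible set $\mathcal{F} := \{\pertmat : (\state,s_0,\probmat+\pertmat,AP,L) \in PS\}$ of Problem \ref{prob:opt} onto $PS$: for any $\mc' = (\state,s_0,\probmat',AP,L) \in PS$, taking $\pertmat := \probmat' - \probmat$ gives a feasible $\pertmat$ with associated DTMC exactly $\mc'$. In particular, $\mathcal{F}$ is nonempty since $\mc \in PS$ always (so $\pertmat = \zmat \in \mathcal{F}$), which guarantees the optimization is well posed; and since the lemma supposes $\pertmat^*$ solves Problem \ref{prob:opt}, the minimum is attained.

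Next I would invoke the defining property of $\pertmat^*$. Writing $\mc^* = (\state, s_0, \probmat + \pertmat^*, AP, L)$, the fact that $\pertmat^* \in \mathop{argmin}_{\pertmat} \prsat{\mc'}$ over $\mathcal{F}$ means precisely that $\mc^* \in PS$ and $\prsat{\mc^*} \leq \prsat{(\state,s_0,\probmat+\pertmat,AP,L)}$ for every $\pertmat \in \mathcal{F}$. By the surjectivity established above, this inequality ranges over every $\mc' \in PS$, i.e. $\prsat{\mc^*} \leq \prsat{\mc'}$ for all $\mc' \in PS$. This is exactly the condition in Problem \ref{prob:advex} characterizing a worst case attack, so $\pertmat^*$ is the worst case attack for $\mc$ with respect to $PS, \varphi$, completing the proof.

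There is no real obstacle here; the only point requiring a sentence of care is the surjectivity of the parametrization $\pertmat \mapsto \probmat + \pertmat$ onto $PS$ (so that the minimization over perturbation matrices genuinely covers all of $PS$), and the fact that the argmin is nonempty, both of which follow directly from $PS \subseteq \pertset{\mc}$ and $\mc \in PS$. If one wanted to additionally justify \emph{existence} of $\pertmat^*$ in the $\varepsilon,d$-bounded case rather than assuming it, one would note that such $PS$ is compact and $\prsat{\cdot}$ depends continuously on the entries of $\probmat'$, so the minimum is attained; but this is not needed for the lemma as stated.
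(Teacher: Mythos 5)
Your proposal is correct and takes essentially the same route as the paper, which simply observes that the claim holds by unfolding the definitions of Problems \ref{prob:opt} and \ref{prob:advex}; you spell out the correspondence between the feasible set of perturbation matrices and $PS$ more explicitly than the paper does, but the substance is identical.
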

\begin{proof}
    Assume DTMC $\mc$, perturbation set $PS\subseteq\pertset{\mc}$, and PCTL$^*$ path formula $\varphi$. By definition, $\pertmat^*$ which solves Problem \ref{prob:opt} for \mc, $PS$, and $\varphi$ is the worst case attack for \mc\ with respect to $PS,\;\varphi$.
\end{proof}

It follows from Lemmas \ref{lem:advsolvesverif} and \ref{lem:advsolvesmindelta} that the solution to the attack synthesis optimization can also be used to solve Problems \ref{prob:verif} and \ref{prob:dsynth}.

We note that the form of the objective function (\ref{eqn:objective}) is dependent on the property $\varphi$. For example, the objective is linear with respect to next step properties (reachability in exactly one time step), quadratic with respect to next next step properties (reachability in exactly two time steps), and is non-convex in general. Similarly, the form of the constraints are dependent on the threat model. For example, the constraints corresponding to the $\varepsilon,max$-bounded threat models described in Section \ref{sec:tms} are linear. This means that the techniques available for solving this optimization can change based on the property and threat model.

\subsection{Solutions to the DTMC Adversarial Robustness Problems for $\varepsilon,max$-bounded threat models}\label{sec:objective}
We propose two solutions to the optimization problem for $\varepsilon,max$-bounded threat models: \emph{optimization with direct computation} and \emph{optimization with symbolic solution function}. Because the constraints are linear, but the shape of the objective depends on the property, both solutions use the Sequential Least Squares Programming (SLSQP) \cite{2020SciPy-NMeth} algorithm to search the parameter space for the minimal value. The SLSQP algorithm performs an iterative search of the space to find a minimum. The main difference between the two solutions is the technique used to compute the objective at each iteration. The first method performs a model-checking procedure at every function call. The second method constructs a pDTMC from the original DTMC before the optimization begins, and then simply instantiates this function at each call to the objective.

\subsubsection{Optimization with Direct Computation}
In the optimization with direct computation solution, we solve the optimization presented in Problem \ref{prob:opt} using the SLSQP iterative algorithm. At each iteration, we construct the perturbed DTMC, $\mc'=(\state,s_0,\probmat,AP,L)$, and use existing model checking techniques to compute $\prsat{\mc'}$ directly. This solution is beneficial because computing $\prsat{\mc'}$ for a DTMC is solved for many logics including PCTL and PCTL$^*$. Tools such as Prism \cite{prism} and Storm \cite{storm} compute these probabilities efficiently for large state spaces.

\subsubsection{Optimization with Symbolic Solution Function}
In the optimization with symbolic solution function approach, we construct a pDTMC $\mc_X$ from the original DTMC $\mc$, replacing vulnerable transitions with parameters. In some cases, like when considering PCTL properties, we can then find the symbolic solution function for $\prsat{\mc_X}$ (as described in eqn. (\ref{eqn:symbsol})). We then run the SLSQP algorithm for searching the parameter space to find the minimum. This eliminates the need to perform model checking at each iteration. In Algorithm \ref{alg:symb}, we describe this procedure in detail.

\begin{algorithm}
    \caption{Symbolic Objective Approach}
    \label{alg:symb}
    \begin{algorithmic}[1]
    \renewcommand{\algorithmicrequire}{\textbf{Input:}}
    \renewcommand{\algorithmicensure}{\textbf{Output:}}
    \REQUIRE DTMC $\mc=(\state,s_0,\probmat,AP,L)$, $PS\subseteq\pertset{\mc}$, PCTL path formula $\varphi$
    \ENSURE perturbation matrix $\pertmat^*$
    \STATE $X \gets O$, $\probmat^\paramset\gets \probmat$
    \FOR{$(s,s')$ which is a vulnerable transition in $PS$}
        \STATE add $v_{(s,s')}$ to \paramset\
        \STATE $\probmat^\paramset\gets v_{(s,s')}$ 
    \ENDFOR
    \STATE $\mc_\paramset\gets (\state, s_0, \probmat^\paramset, AP, L)$
    \STATE generate symbolic solution function $f_{\varphi}(X)$ for $\mc_\paramset$
    \STATE $\kappa^*\gets \mathop{argmin}_\kappa$ $\kappa(f_{\varphi}(\paramset))$ subject to $\kappa(\mc_\paramset)\in PS$
    \RETURN $\kappa^*(\probmat^\paramset)-\probmat$ 
    \end{algorithmic} 
    \end{algorithm}

State-of-the-art parametric model checking tools such as PARAM~\cite{param}, Storm~\cite{storm}, PROPhESY~\cite{prophesy}, and PRISM~\cite{prism} can generate the symbolic solution function efficiently for few parameters on subsets of PCTL. Generating this function can become expensive as the number of parameters in the constructed pDTMC (and complexity of the symbolic function) increases. This method therefore becomes less desirable under less restrictive threat models with many vulnerable transitions. Additionally, there is no known process for generating symbolic solution functions for all PCTL$^*$ properties. We perform an in-depth comparison of these two methods over case studies of various sizes with different threat models in Section \ref{sec:evalopts}.

\section{Case Studies and Evaluation}\label{sec:eval}
In the following sections we evaluate the two optimization solutions to our three adversarial robustness problems, as described in Section \ref{sec:opt_sol}. We implement both the direct computation method and the symbolic solution function approach. We compare the two approaches over different parameter and state sizes. We then use these solutions to demonstrate the flexibility and utility of our definitions over three case studies. Code for all experiments can be found here: \url{https://github.com/lisaoakley/dtmc_attack_synthesis}.

\subsection{Case Studies}
We perform our evaluation on two protocol DTMCs, and one DTMC defined by an agent acting in an MDP.

\subsubsection{Simple Communication Protocol}
We first show our technique works on the simple communication protocol from \cite{baier2008principles}, described in Fig. \ref{fig:simpleprot}. In this protocol, a sent message can be delivered or lost. When it is lost, the sender re-sends the message infinitely many times until it is delivered. The goal of the system is to deliver the message quickly. In this system, we model a structure-preserving selected states (SPSS) adversary who can attack the ``try'' state by modifying the probability that a message is lost or delivered.
\begin{figure}[ht]
    \centering
    \centering
\scalebox{.7}{%
\begin{tikzpicture}[node distance = 3cm, initial text =,elliptic state/.style={draw,ellipse}]] 
    \node[elliptic state,initial,initial where=above] (start) {$start$}; 
    \node[elliptic state] (try) [below of=start] {$try$};
    \node[elliptic state] (lost) [right of=try] {$lost$};
    \node[elliptic state] (delivered) [left of=try] {$delivered$};

    % start to try
    \draw[->, thick] (start) to node [right,midway,rotate=0] {\small 1} node [swap] {} (try);

    % try to lost
    \draw[->, thick] (try) [out=-45,in=180+45] to node [above,midway,rotate=0] {\small .2} node [swap] {} (lost);
    % try to delivered
    \draw[->, thick] (try) to node [above,midway,rotate=0] {\small .8} node [swap] {} (delivered);

    % lost to try
    \draw[->, thick] (lost) [out=180-45,in=45] to node [above,midway,rotate=0] {\small 1} node [swap] {} (try);

    % delivered to start
    \draw[->, thick] (delivered) to node [above,midway] {\small 1} node [swap] {} (start);
\end{tikzpicture} 
}
    \caption{Simple Communication Protocol from \cite{baier2008principles}.}
    \label{fig:simpleprot}
\end{figure}
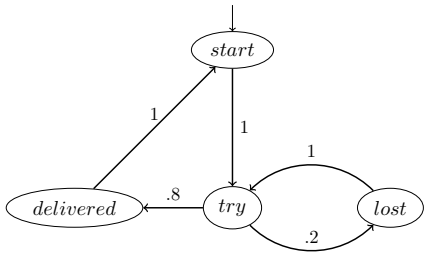

\subsubsection{IPv4 Zeroconf Protocol}

The IPv4 Zeroconf Protocol is a network protocol that is commonly used to evaluate model-checking approaches \cite{hermanns_model-checking_2006}. Here we consider the DTMC representing the procedure for a new host joining the network as described in Fig. \ref{fig:zeroconf}. In this procedure, we assume a fixed number of $m$ hosts already on the network. When a new host joins the network, they immediately generate a random address from a predetermined set of $K=65024$ possible addresses and broadcasts it to the existing hosts. With probability $m/K$, this address is a collision. The host then waits a predetermined $n$ time ticks to get a response from the network. Assuming there is a collision, at each time tick there is some probability $p$ that the new host learns of this mistake and is able to re-generate their address. The goal of the host is to successfully join the network with a unique address, in as little time as possible. 

In this case, the adversary attempts to prevent existing hosts from notifying the new host of the collision. We model this attack by a structure-preserving selected states (SPSS) threat model, allowing the adversary to reduce the probability at certain time ticks (states in $\{1,\dots,n\}$). In the case of the symbolic approach, this is equivalent to re-building the DTMC with unique probabilities for transitions between time ticks in the vulnerable set. In prior work on pDTMCs, the IPv4 protocol was analyzed by varying a single parameter $p$ \cite{hermanns_model-checking_2006}. Our evaluation of this case study is more expansive, allowing the adversary to attack some, but not all of the transition probabilities.

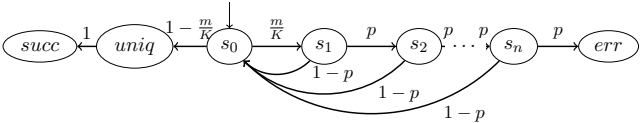
\begin{figure}[ht]
    \centering
    \centering
\scalebox{.7}{%
\begin{tikzpicture}[node distance = 1.8cm, initial text =,elliptic state/.style={draw,ellipse}]] 
    \node[elliptic state,initial,initial where=above] (s0) {$s_0$}; 
    \node[elliptic state] (s1) [right of=s0] {$s_1$};
    \node[elliptic state] (s2) [right of=s1] {$s_2$};
    \node[elliptic state] (sn) [right of=s2] {$s_n$};
    \node[elliptic state] (err) [right of=sn] {$err$};
    \node[elliptic state] (unique) [left of=s0] {$uniq$};
    \node[elliptic state] (success) [left of=unique] {$succ$};

    % s0 to unique
    \draw[->, thick] (s0) to node [above,midway,rotate=0] {\small $1-\frac{m}{K}$} node [swap] {} (unique);
    % s0 to s1
    \draw[->, thick] (s0) to node [above,midway,rotate=0] {\small $\frac{m}{K}$} node [swap] {} (s1);

    % s1 to s2
    \draw[->, thick] (s1) to node [above,midway,rotate=0] {\small $p$} node [swap] {} (s2);
    % s1 to s0
    \draw[->, thick] (s1) [out=180+45,in=-45] to node [right,xshift=15pt,midway,rotate=0] {\small $1-p$} node [swap] {} (s0);
    
    % \path (M) to coordinate[pos=0.3] (aux-1) coordinate[pos=0.7] (aux-2) (F);
    % \draw[-latex]   (M) -- (aux-1)
    %             (aux-2)-- (F);
    % \draw[dotted]    (aux-1) to ["rho"] (aux-2);
    
    % s2 to sn
    % \path (s2) to coordinate[pos=0.3] (aux-1) coordinate[pos=0.7] (aux-2) (sn)
    \draw[->, thick] (s2) to node [above,midway,rotate=0] {\small $p\;\;\;\;\;p$} node [pos=0.5,fill=white] {\dots} (sn);
    % s2 to s0
    \draw[->, thick] (s2) [out=180+45,in=-45] to node [right,xshift=25pt,midway,rotate=0] {\small $1-p$} node [swap] {} (s0);

    % sn to err
    \draw[->, thick] (sn) to node [above,midway,rotate=0] {\small $p$} node [swap] {} (err);
    % sn to s0
    \draw[->, thick] (sn) [out=180+45,in=-45] to node [right,xshift=35pt,midway,rotate=0] {\small $1-p$} node [swap] {} (s0);
    
    % unique to succ
    \draw[->, thick] (unique) to node [above,midway,rotate=0] {\small 1} node [swap] {} (success);

    % % try to lost
    % \draw[->, thick] (try) [out=-45,in=180+45] to node [above,midway,rotate=0] {\small .2} node [swap] {} (lost);
    % % try to delivered
    % \draw[->, thick] (try) to node [above,midway,rotate=0] {\small .8} node [swap] {} (delivered);

    % % lost to try
    % \draw[->, thick] (lost) [out=180-45,in=45] to node [above,midway,rotate=0] {\small 1} node [swap] {} (try);

    % % delivered to start
    % \draw[->, thick] (delivered) to node [above,midway] {\small 1} node [swap] {} (start);
\end{tikzpicture} 
}
    \caption{IPv4 Zeroconf Protocol \cite{hermanns_model-checking_2006}. Our model is a DTMC with initial state $s_0$ and states $s_1$ through $s_n$ indicating the number of time ticks the new host has waited to hear from existing hosts regarding a collision. State $err$ indicates the new host erroneously chooses a colliding address. In state $uniq$ the host has found a unique address and in state $succ$ the host has used that address to successfully join the network. }
    \label{fig:zeroconf}
\end{figure}

\begin{table*}[t]
    \centering
    \begin{tabular}{llllr}
        \toprule
                   Property & \# States & \# Params &               Method &            Total Duration (in seconds) \\
        \midrule
          P=? [s!=5 U s=24] &       25 &        5 &   Direct Computation &                     \textbf{0.354} \\
          P=? [s!=5 U s=24] &       25 &        5 & Symbolic Soln. Func. &     0.052 + 0.076 = \textbf{0.128} \\
          P=? [s!=5 U s=24] &       25 &       10 &   Direct Computation &                     \textbf{2.445} \\
          P=? [s!=5 U s=24] &       25 &       10 & Symbolic Soln. Func. &  124.316 + 0.98 = \textbf{125.296} \\
          P=? [s!=5 U s=24] &       25 &       20 &   Direct Computation &                     \textbf{4.884} \\
          P=? [s!=5 U s=24] &       25 &       20 & Symbolic Soln. Func. &                        Time out \\
         P=? [s!=10 U s=99] &      100 &        5 &   Direct Computation &                     \textbf{12.19} \\
         P=? [s!=10 U s=99] &      100 &        5 & Symbolic Soln. Func. &     2.006 + 1.089 = \textbf{3.095} \\
         P=? [s!=10 U s=99] &      100 &       10 &   Direct Computation &                     \textbf{23.69} \\
         P=? [s!=10 U s=99] &      100 &       10 & Symbolic Soln. Func. & 717.126 + 2.792 = \textbf{719.918} \\
         P=? [s!=10 U s=99] &      100 &       20 &   Direct Computation &                    \textbf{67.616} \\
         P=? [s!=10 U s=99] &      100 &       20 & Symbolic Soln. Func. &                        Time out \\
        P=? [s!=15 U s=224] &      225 &        5 &   Direct Computation &                     \textbf{59.32} \\
        P=? [s!=15 U s=224] &      225 &        5 & Symbolic Soln. Func. &                        Time out \\
        P=? [s!=15 U s=224] &      225 &       10 &   Direct Computation &                   \textbf{134.714} \\
        P=? [s!=15 U s=224] &      225 &       10 & Symbolic Soln. Func. &                        Time out \\
        P=? [s!=15 U s=224] &      225 &       20 &   Direct Computation &                   \textbf{290.201} \\
        P=? [s!=15 U s=224] &      225 &       20 & Symbolic Soln. Func. &                        Time out \\
        \bottomrule
        \end{tabular}
    \caption{Comparison between objective computation methods for $5\times 5$, $10\times 10$, and $15\times 15$ Grid World DTMCs with $5,\; 10,$ and $20$ vulnerable transitions (parameters). For symbolic solution, results are presented as ``symbolic solution function generation duration'' + ``optimization duration''. Pre-computing the symbolic solution function improves the optimization run-time, but the pre-computation step quickly becomes unmanageable as the number of parameters increases.}
    \label{tab:scale}
\end{table*}
\subsubsection{Grid World}

We consider the $n\times m$ Grid World~\cite{gridworld} class of case studies in which an agent navigates an $n\times m$ grid by moving up, down, left, and right according to a memoryless, deterministic policy. The Grid World environment is modeled as an MDP. We assume the policy is fixed, and use the policy adversarial robustness definition from Section \ref{sec:DTMCtoMDP} to reason about the composed DTMC. In each state of the composed DTMC, there is some probability that the agent does not go in their intended direction. We consider an adversary who is able to increase and decrease these probabilities, within constraints defined by an $\varepsilon,max$-bounded threat model. Adversaries to this model can use any of the four threat models defined in Section \ref{sec:tms}.

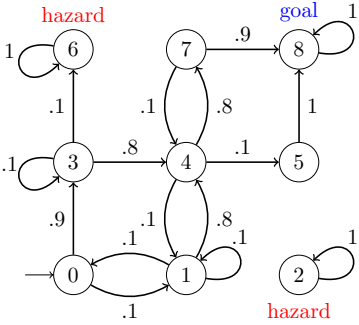
\begin{figure}[ht]
    \centering
    \centering
    \scalebox{.75}{%
    \begin{tikzpicture}[main/.style = {draw, circle},node distance = 2cm, initial text =,state/.style={circle, draw, minimum size=.7cm}]] 
        \node[state,initial] (0) {$0$}; 
        \node[state] (1) [right of=0] {$1$};
        \node[state] (2) [right of=1,label=below:{\normalsize \textcolor{red}{hazard}}] {$2$};
        \node[state] (3) [above of=0] {$3$};
        \node[state] (4) [above of=1] {$4$};
        \node[state] (5) [above of=2] {$5$};
        \node[state] (6) [above of=3,label={\normalsize \textcolor{red}{hazard}}] {$6$};
        \node[state] (7) [above of=4] {$7$};
        \node[state] (8) [above of=5,label={\normalsize \textcolor{blue}{goal}}] {$8$};

        % 0 right
        \draw[->, thick] (0) [out=-30,in=180+30] to node [below,midway,rotate=0] {\normalsize .1} node [swap] {} (1);
        % 0, up
        \draw[->, thick] (0)  to node [left, midway] {\normalsize .9} node [swap] {} (3);

        % 1 left
        \draw[->, thick] (1) [out=180-30,in=30] to node [above,midway,rotate=0] {\normalsize .1} node [swap] {} (0);
        % 1 self
        \draw[->, thick] (1) [out=-10,in=45,loop] to node [above,midway, yshift=3pt] {\normalsize .1} node [swap] {} (1);
        % 1 up
        \draw[->, thick] (1) [out=60,in=-60] to node [right,midway,rotate=0] {\normalsize .8} node [swap] {} (4);

        % 2 self
        \draw[->, thick] (2) [out=-10,in=45,loop] to node [above,midway, yshift=3pt] {\normalsize 1} node [swap] {} (2);

        % 3 self
        \draw[->, thick] (3) [out=180-10,in=180+45,loop] to node [above,midway, xshift=-5pt] {\normalsize .1} node [swap] {} (3);
        % 3 right
        \draw[->, thick] (3)  to node [above, midway] {\normalsize .8} node [swap] {} (4);
        % 3 up
        \draw[->, thick] (3)  to node [left, midway] {\normalsize .1} node [swap] {} (6);

        % 4 down
        \draw[->, thick] (4) [out=180+60,in=180-60] to node [left,midway,rotate=0] {\normalsize .1} node [swap] {} (1);
        % 4 right
        \draw[->, thick] (4)  to node [above, midway] {\normalsize .1} node [swap] {} (5);
        % 4 up
        \draw[->, thick] (4) [out=60,in=-60] to node [right,midway,rotate=0] {\normalsize .8} node [swap] {} (7);

        % 5 up
        \draw[->, thick] (5)  to node [right, midway] {\normalsize 1} node [swap] {} (8);

        % 6 self
        \draw[->, thick] (6) [out=180-10,in=180+45,loop] to node [above,midway, xshift=-5pt] {\normalsize 1} node [swap] {} (6);

        % 7 down
        \draw[->, thick] (7) [out=180+60,in=180-60] to node [left,midway,rotate=0] {\normalsize .1} node [swap] {} (4);
        % 7 right
        \draw[->, thick] (7)  to node [above, midway] {\normalsize .9} node [swap] {} (8);

        % 8 self
        \draw[->, thick] (8) [out=-10,in=45,loop] to node [above,midway, yshift=3pt] {\normalsize 1} node [swap] {} (8);
    \end{tikzpicture} 
    }
    \caption{Example $3\times 3$ Grid World DTMC \cite{gridworld}. Note that hazard state 2 is unreachable. Under a threat model that does not preserve structure, this hazard state can become reachable and affect the performance of the agent.}
    \label{gridworld}
\end{figure}
States in Grid World grids are indexed sequentially from left to right then bottom to top. For some experiments, we generate randomized $n\times m$ Grid World examples for a given $n,m,$ and transition structure. We accomplish this by generating a random probability distribution over adjacent states in the transition structure, for each state in the DTMC. In other words, the structure of each DTMC is fixed, but the probabilities are randomized.

\subsection{Experiment Setup}
For our optimizations, we use the SciPy optimize package~\cite{2020SciPy-NMeth} with the Sequential Least Squares Programming (SLSQP) algorithm. When directly computing \prsat{\mc}, we use the PRISM Java API and the Py4J python package~\cite{prism,py4j}. When computing the symbolic solution function, we use the PARAM parametric probabilistic model checker~\cite{param}. All experiments were run on a computer with an Intel Xeon Silver 4114 2.20GHz processor and 187GB RAM.

\subsection{Comparing Solutions}\label{sec:evalopts}
In Section \ref{sec:objective}, we presented two optimization solutions for the worst-case attack synthesis problem. In Table \ref{tab:scale}, we present results comparing the two techniques over randomized $5\times 5$, $10\times 10$, and $15\times 15$ Grid World DTMCs. We use an ST threat model with $5,\; 10$, and $20$ vulnerable transitions (parameters). We choose the state sizes and number of parameters to illustrate the relative growth in computation time between the two solutions over a large range of settings. We use a lower bound of 5 parameters, as this is close to the highest number of parameters typically considered in prior work on computing symbolic solution functions, and a upper bound of 20 parameters as the symbolic solution times out for larger values. We see that the symbolic solution improves the optimization time when compared to direct computation. As the number of parameters increases, however, the symbolic solution function quickly becomes expensive to generate, especially with larger state spaces. For example, in a $25$ and $100$ state Grid World with $5$ parameters, the symbolic computation is significantly faster. Once the number of parameters is greater than or equal to 10, the direct computation is always faster. For $225$ state Grid World and more than 10 parameters, the symbolic solution times out. Therefore, in cases of many vulnerable transitions, the direct computation method is preferred.

\begin{figure*}[ht]
    \centering
    \subfloat[Simple communication protocol satisfaction probabilities under attack with $\varphi=\lozenge^{\leq 10}delivered$.\label{simple}]{%
    \includegraphics[width=.4\linewidth]{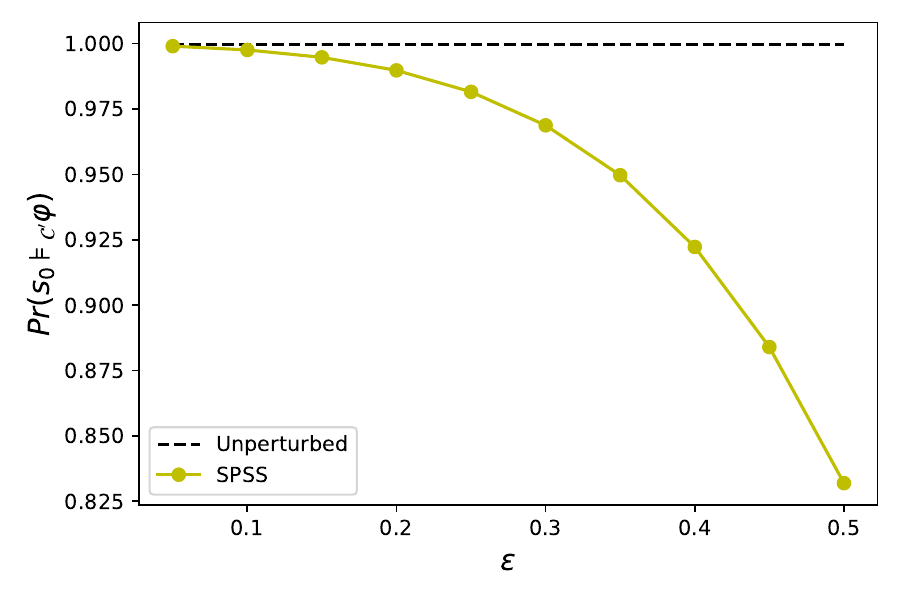}}
    \qquad
    \subfloat[Zeroconf IPv4 Protocol satisfaction probabilities under attack with $n=10$, $\varphi=\lozenge^{\leq 30}success$ and structure-preserving selected states (SPSS) threat models with vulnerable state sets $\{1,\dots5\}$ (\emph{early}), $\{6,\dots10\}$ (\emph{late}), and $\{1,\dots10\}$ (\emph{all}).\label{ipv4}]{%
    \includegraphics[width=.4\linewidth]{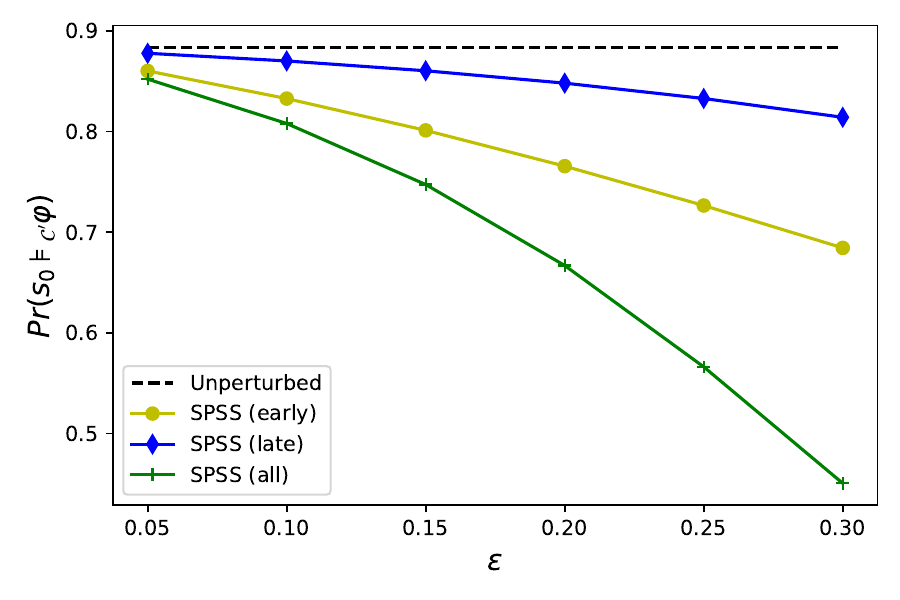}}
    \qquad
    \subfloat[$5\times 5$ Grid World satisfaction probabilities under attack with selected states (SS) and structure-preserving selected states (SPSS) threat models for vulnerable state sets $\{1, 3, 10\}$, and $\{1, 3, 10, 12, 19\}$ and $\varphi=\lozenge^{\leq 200}goal$. \label{gw1}]{%
    \includegraphics[width=.4\linewidth]{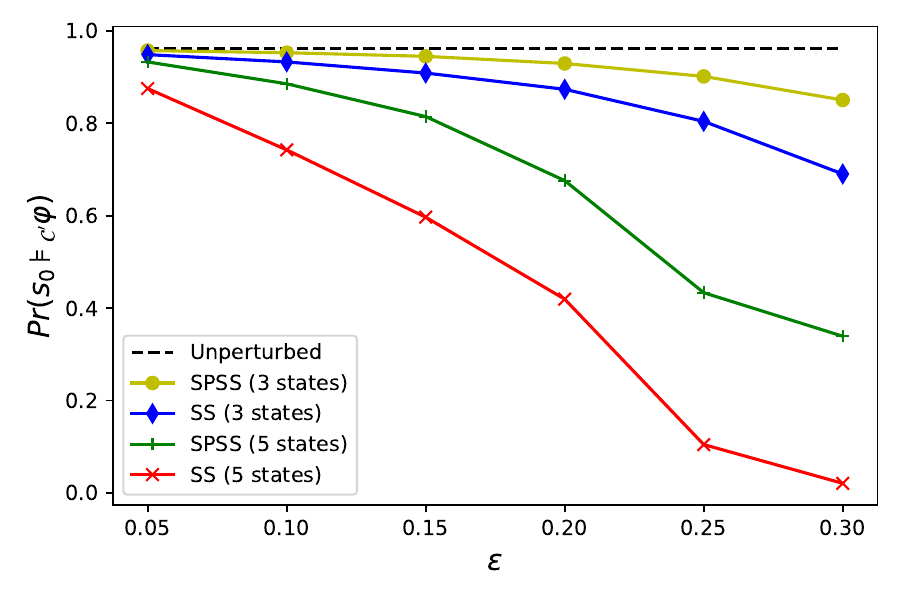}}
    \qquad
    \subfloat[$5\times 5$ Grid World satisfaction probabilities under attack with selected transitions (ST) and structure-preserving selected transitions (SPST) threat models attacking all adjacent transitions between the bottom right $3\times 3$ states, and bottom right $2\times 3$ states for $\varphi=\lozenge^{\leq 200}goal$.\label{gw2}]{%
    \includegraphics[width=.4\linewidth]{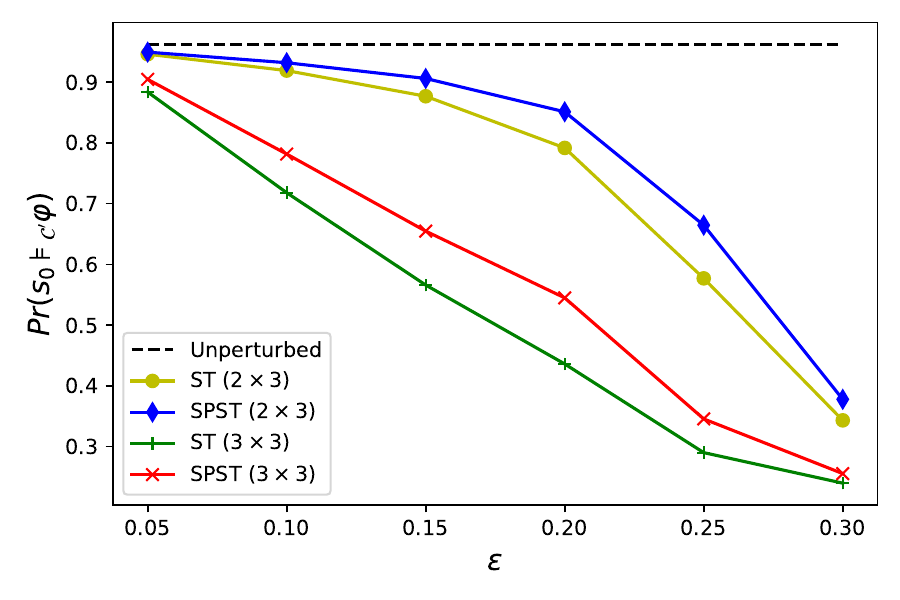}}
    \caption{Property satisfaction probability over attack budget (indicated by $\varepsilon$) over various case studies, threat models, and properties.}
    \label{fig:eps}
\end{figure*}

\subsection{Attack Performance}\label{sec:att_perf}
In Fig. \ref{fig:eps}, we show the property satisfaction probability for each of our three case studies against different properties and $\varepsilon,max$-bounded threat models of varying budget (indicated by $\varepsilon$). We choose values of $\varepsilon$ mostly between $0.05$ and $0.5$ to show the impact of attacker budget on attacker performance. We choose only small values of $\varepsilon$, to reflect the attackers' desire to make un-detectable modifications to the system. In the simple communication protocol, we only have one available attack state ($try$). We plot its performance over increasing $\varepsilon$ for $\varphi=\lozenge^{\leq 10}delivered$. As we increase $\varepsilon$, the property satisfaction probability decreases.

For the IPv4 Zeroconf Protocol we set $n=10$ and $m=50,000$ and measure satisfaction probability for $\varphi=\lozenge^{\leq 30}success$. We consider three instances of the structure-preserving selected states (SPSS) threat models vulnerable state sets $\{1,\dots5\}$, $\{6,\dots10\}$, and $\{1,\dots10\}$. We see that attacking the network at earlier time ticks is more effective than later, but attacking the network at all time ticks is most effective. As we increase $\varepsilon$, we see that all attacks decrease the property satisfaction probability, however the \emph{early} and \emph{all} attacks decrease this probability more quickly than the \emph{late} attack.

\begin{figure*}[t]
    \centering
    \includegraphics[width=.83\linewidth]{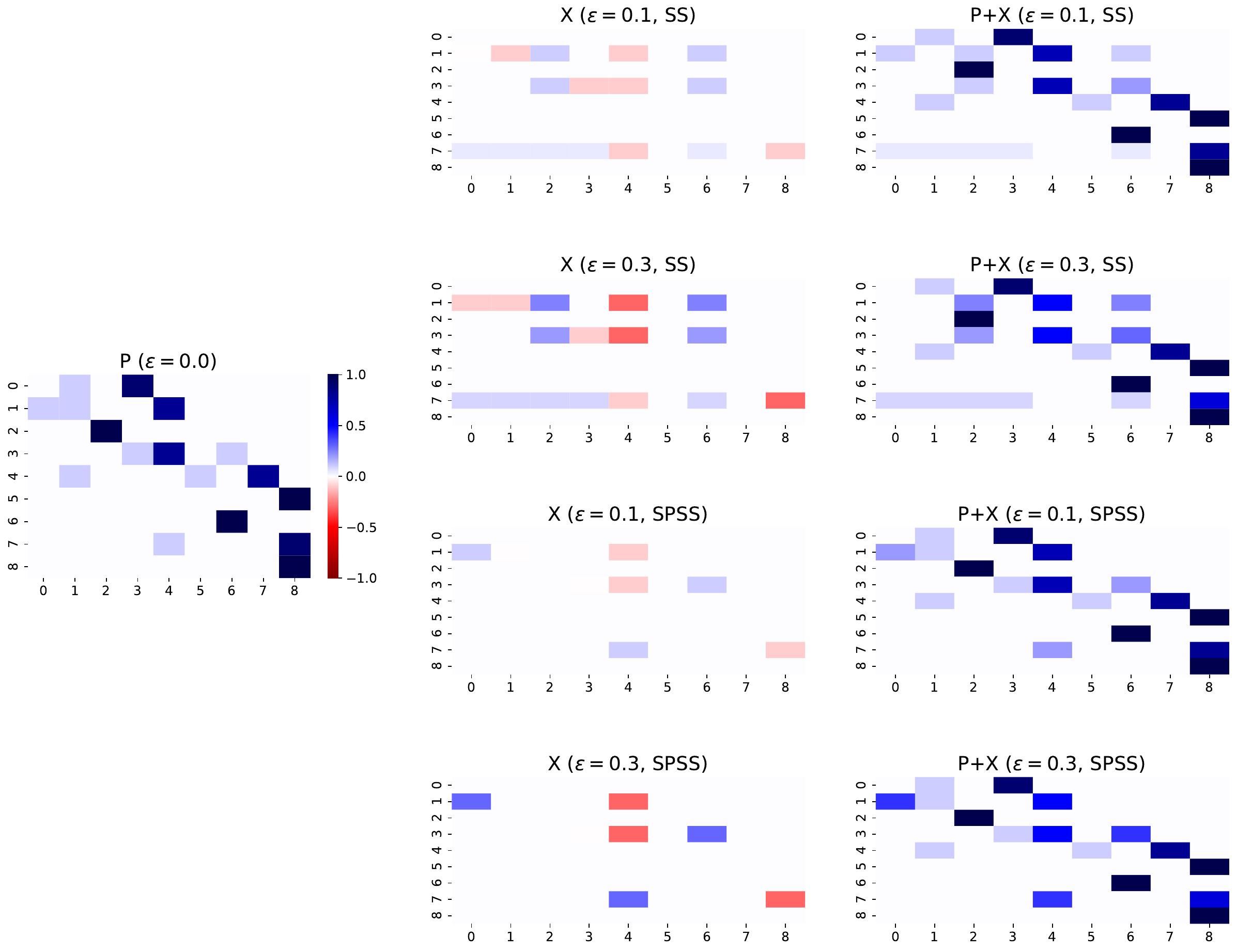}
    \caption{Transition probability matrix for the original DTMC (left), perturbation matrices for various threat models (middle) and perturbed DTMCs (right) with respect to $\varphi= (\neg hazard) \mathbf{U}^{\leq6} goal$. Specifically, selected states (SS) and structure-preserving selected states (SPSS) threat models with vulnerable states 1, 3, and 7 for $\varepsilon=0.1$ and $\varepsilon=0.3$. In general, perturbations increase probabilities of transitioning to hazard states and decrease probability of transitioning to the goal state. In the SS case, the previously unreachable hazard state becomes reachable after attack. Only rows 1, 3, and 7 of the perturbation matrices have non-zero entries.}
    \label{fig:synth}
\end{figure*}

For the randomized $5\times 5$ Grid World DTMCs we measure satisfaction probability for $\varphi=\lozenge^{\leq 200}goal$, where the $goal$ state is in the top right corner of the grid. Here, we consider all four $\varepsilon,max$-bounded threat models: selected states (SS), structure-preserving selected states (SPSS), selected transitions (ST), and structure-preserving selected transitions (SPST). We compare SS and SPSS attackers with vulnerable state sets $\{1, 3, 10\}$, and $\{1, 3, 10, 12, 19\}$, and ST and SPST attackers who can target all adjacent transitions in the bottom right $3\times 3$ states, and bottom right $2\times 3$ states. ST attackers can additionally add diagonally adjacent transitions in their respective regions. We see in Fig. \ref{gw1} and \ref{gw2}, as the attacker is able to control more states and transitions, the property satisfaction probability decreases more dramatically. In the case of $\varepsilon=.3$ and a non-structure-preserving attacker who controls $20\%$ of the states in the system, the property satisfaction probability drops from above $95\%$ to close to $0\%$. In contrast, a structure-preserving attacker who controls $12\%$ of states in the system can only decrease the property satisfaction probability to $~85\%$.

In all case studies, increasing $\varepsilon$ causes the property satisfaction probability to decrease, confirming Remark \ref{rem:mono}. Additionally, we see that, under the same class of case study, structure-preserving attacks are less effective than attacks which can add transitions to the system. Furthermore, adversaries acting under threat models with more vulnerable transitions perform better (i.e., cause more damage to system utility) than those with less.

\begin{figure*}[t]
    \centering
    \includegraphics[width=.6\linewidth]{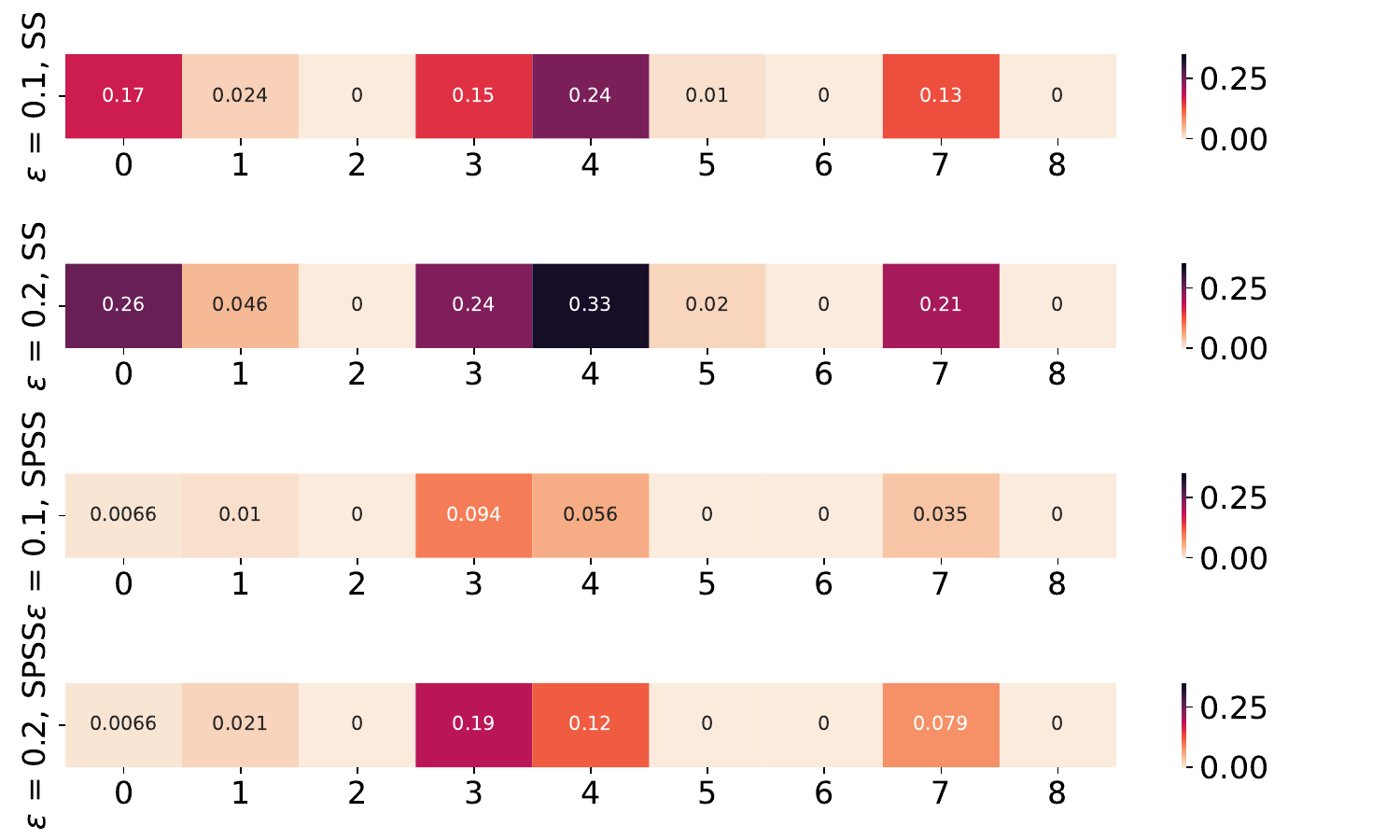}
    \caption{Horizontal axis indicates a unique state of the 9 state Grid World DTMC from Fig. \ref{gridworld}. Each row represents a different threat model and attack budget (indicated by $\varepsilon$). Darker colors indicate large $\delta$ between original and perturbed property satisfaction probability for the selected state (SS) or structure preserving selected state (SPSS) threat model with single vulnerable state for $\varphi = \neg hazard \mathbf{U}^{\leq6} goal$. Attacking certain states provides significantly more loss of utility than others. For example, state $3$ is near a hazard state, and attacking it causes a drop in utility across all threat models. State 1 is less valuable to the structure-preserving threat model than its counterpart. State 4 is central to the system and is valuable in both cases. Hazard and goal states are not valuable to the system with respect to $\varphi$, as being in these states means the property has already been satisfied or violated.}
    \label{fig:component}
\end{figure*}

\subsection{Attack Synthesis across Threat Models}
Using the $3\times3$ Grid World DTMC in Fig. \ref{gridworld}, we analyze the specific perturbations favored by attacks under the selected state (SS) threat model and  structure-preserving selected state (SPSS) threat model with vulnerable state set $\mathcal{V}=\{1,3,7\}$. In Fig. \ref{fig:synth} we present the worst-case adversarial perturbations under each threat model for $\varepsilon=0.1$ and $\varepsilon=0.3$ with respect to $\varphi= (\neg hazard) \mathbf{U}^{\leq6} goal$ as heatmaps of the perturbation matrix \pertmat. We choose these $\varepsilon$ values based on our observations of attack effectiveness vs. attacker budget in Sec. \ref{sec:att_perf}. Similarly to adversarial examples to neural networks, we see that small perturbations to the environment can lead to large changes in the system properties. Under all threat models, the attacker decreases the probability of jumping from state 7 to state 8, and redistributes those probabilities to states closer to the start. Attacks do not increase probability of reaching state 5, as it transitions to the goal state with probability 1. Under attacks which do not preserve structure, we see that the worst case attack adds transitions to hazard state 2, making it reachable from the start state, which was not the case in the original DTMC. All rows of the perturbation matrices which are not in the vulnerable state set have zero entries, as they cannot be modified. We infer that an attacker gets different success depending on the exact state they control in the system.

\subsection{Component Analysis}
Our robustness definitions are useful in determining which components of a system are important with respect to a specific property. In Fig. \ref{fig:component}, we compare $\delta$ between original and perturbed property satisfaction probability for SS and SPSS attackers with $\varepsilon=0.1$ and $\varepsilon=0.2$. Again, we choose these $\varepsilon$ values based on our observations of attack effectiveness vs. attacker budget in Sec. \ref{sec:att_perf}. For each threat model, we consider a set of one vulnerable attack state $\mathcal{V}=\{s\}$ for each state $s$ in a $3\times3$ Grid World DTMC from Fig. \ref{gridworld}. Under structure-preserving threat models, states with transitions to hazards and the goal are valuable. Under non structure-preserving threat models, the start state is also valuable.

\section{Connection to Adversarial Machine Learning}\label{sec:adv_ml}
There is an intuitive connection between our definition of adversarial robustness with respect to perturbed transition matrices, and the notion of adversarial robustness in supervised machine learning with respect to adversarial examples. While this is not a formal analysis, we believe this connection is useful in thinking about the problem, and was vital to our problem and solution formulation.

We demonstrate this connection by breaking down both DTMC modeling/verification and supervised machine learning (ML) model development into three main stages: training, validation, and deployment. We then describe how our perturbed transition probability matrices connect to adversarial examples for supervised machine learning. Finally, we explain how this breakdown leads to our optimization problem formulation and solutions for adversarial robustness with respect to probabilistic properties in a DTMC.

\subsection{Developing the Model}
In the \emph{training} stage of a supervised machine learning (ML) model for classification, a deterministic function (e.g., neural network, decision tree) is learned from labeled training data sampled from some distribution that is assumed to represent the real-world problem. The learning process often employs a probabilistic procedure such as Stochastic Gradient Descent (SGD). In the \emph{validation} stage, the performance of the trained model is evaluated by computing the loss between predicted and true labels of new data points sampled from the same distribution as the training data. In the \emph{deployment} stage, the model is used to perform classification tasks on real world data.

We explore a similar set of stages for modeling and verifying a system using a DTMC. In the training stage, the transition probability matrix of DTMC is specified or learned through observation of a system in the real world. In the validation stage, the system performance is calculated by computing the probability that the DTMC model satisfies some logical property. In the deployment stage, the system runs in the real world.

In both cases, we have a notion of training or learning a model, validation via evaluating some utility function, and running the system on the real world. This is not a perfect analogy, however. The key difference between the two cases is the validation stage. In the ML model, the validation stage runs the model on randomly sampled labeled data with the goal of evaluating how well the model classifies the underlying distribution. In contrast, for the DTMC, the validation stage computes the probability of satisfying a property in the DTMC with the goal of evaluating the quality of the system itself.

Though the goal of the validation is different, both scenarios assume the utility is the same in the deployment stage as it was in the validation stage. Therefore, both scenarios have a natural attacker who targets the deployment stage to cause unexpected behavior in the real-world environment. We find that thinking about the problem in these terms is useful for identifying a problem formulation and developing a solution.

\subsection{Modeling the Adversary}
We intuitively link our threat model formulation to adversarial examples to supervised ML models. An \emph{adversarial example} is an input in the deployment stage of an ML model, which is similar to a legitimate input from the assumed data distribution, but for which the model performs poorly \cite{biggio2013evasion,szegedy2013intriguing}. For example, one can generate an adversarial example by adding targeted noise bounded by an $\ell_\infty$ ball around a legitimate input to change the model's prediction.

We develop a notion similar to that of an adversarial example by defining valid perturbations to the transition probability matrix of the DTMC within an $\varepsilon$ ball with respect to a metric distance $d$. A system administrator can specify a minimum acceptable property satisfaction probability threshold (which we encode in the variable $\delta$) to categorize the system as functioning (satisfaction probability is above the threshold) or non-functioning (satisfaction probability is below the threshold). In both cases, the goal of the adversary is to maximally reduce system utility using small changes to the input. In an ML model, the attacker may want to change the prediction confidence on an input after applying a bounded perturbation. In the case of the DTMC, the attacker wants to maximally reduce the satisfaction probability of a property after applying a bounded perturbation to the transition probabilities.

The adversary capabilities for the two scenarios have a similarly analogous relationship. For ML models, an attacker who develops an adversarial example may provide deployment-time inputs to the model. In an effort to remain stealthy, the adversary has a bound on its ability to deviate from a legitimate input. We think of the ``input'' to the DTMC verification problem as the transition probability matrix. Of course, it is unlikely that an attacker can modify \emph{all} transitions in the matrix. However, it is possible for the attacker with access to some of these states and transitions to cause virtual or physical damage, as in the case of the collision avoidance protocol or warehouse robot from Sec. \ref{sec:intro}. Based on these observations, we can utilize techniques from adversarial machine learning to develop a solution to the adversarial robustness problem in DTMC modeling/validation.

\subsection{Finding a Worst-Case Attack using Optimization}
Our solutions are inspired by state-of-the art optimization techniques for finding adversarial examples in supervised machine learning models \cite{PSGD}. Similarly to the way that adversarial examples for ML models maximize the mis-classification confidence, our proposed formulation minimizes property satisfaction probability to render the system non-functioning. The main difference is the definition of the specific objective used in the optimization problem. Both problems utilize the bounds on the adversary as constraints in the optimization formulation of the problem. Importantly, both cases are deployment-time attacks and leverage the notion that there may be unexpected inputs when models are deployed in the real world.

\section{Related Work}
Recently, formal methods have been used for verifying adversarial robustness of machine learning models \cite{Reluplex,Reluval,AI2,verily,viper} and systems with machine learning components \cite{dreossi2019compositional,UT_Austin}. In the past few years, methods have been developed which utilize probabilistic model checking to verify properties of reinforcement learning models~\cite{gu_demonstration_2020,gotsman_deep_2020}. Our approach is motivated by research in adversarial machine learning on adversarial examples to neural networks \cite{biggio2013evasion,szegedy2013intriguing}.

Our problem definition is closely related to perturbation analysis of DTMCs \cite{abdulla_model_2011,baldan_perturbation_2014,hermanns_model-checking_2006,su_perturbation_2014,chechik_fact_2016}. There are many papers which explore different aspects of perturbation analysis in the non-adversarial case, focusing evaluation on simple PCTL or reachability properties and small parameter spaces. In \cite{abdulla_model_2011}, Bartocci \etal\ define the model repair problem where, given a property $\varphi$ and model $M$ which does not satisfy $\varphi$, the goal is to find a ``repaired'' model $M'$ which satisfies $\varphi$ and minimizes the cost of the perturbation from $M$ to $M'$. In \cite{baldan_perturbation_2014}, Chen \etal\ provide theoretical analysis of the maximal and minimal reachability probabilities for Markov chains within a bounded distance of one another. In~\cite{hermanns_model-checking_2006}, Sen \etal\ define a model checking problem over interval DTMCs for PCTL properties. In Section \ref{sec:umc} we provide a reduction to this problem for $\varepsilon,max$-bounded adversaries in the restricted PCTL setting. In \cite{su_perturbation_2014} Su and Rosenblum analyze systems using empirical distribution parameters, and in \cite{chechik_fact_2016} Calinescu \etal\ provide methods to compute confidence intervals for pDTMC properties. Perturbation analysis has also been studied in non-probabilistic systems~\cite{zhang_behavioral_2020,buccafurri_enhancing_1999}.

Our proposed solutions are based on explicit and symbolic DTMC model checking techniques \cite{prism,storm,ISCASMC,baier2008principles,courcoubetis_complexity_1995,lanotte2004decidability,daws_symbolic_2005,hahn2011synthesis}. Explicit probabilistic specification and model checking is well-researched, and there are many tools such as PRISM~\cite{prism}, Storm~\cite{storm}, ISCASMC~\cite{ISCASMC} and others which implement efficient algorithms for checking DTMCs on temporal logic properties~\cite{baier2008principles,courcoubetis_complexity_1995}. Symbolic property satisfaction function synthesis is a newer problem, originally addressed by Lanotte \etal, who provided theoretical assurances for pDTMCs with 1-2 parameters, and Daws, who provides a state-elimination algorithm for pDTMCs with non-nested PCTL properties~\cite{lanotte2004decidability,daws_symbolic_2005}. More recent methods have been developed to improve this computation, and these techniques have been implemented in tools such as PARAM, Prism, Storm, and PROPhESY~\cite{param,prism,prophesy,storm,hahn_probabilistic_2011,junges2019parameter,lanotte2007parametric}.

Another related area of study is in parameter synthesis for pDTMCs \cite{winkler_complexity_2019,quatmann_parameter_2016,prophesy,junges2019parameter,cubuktepe_synthesis_2018}. The parameter synthesis problem takes a pDTMC and temporal logic property, and attempts to partition the parameter space into accepting and non-accepting regions. Early work in this area primarily concerned reachability properties, and could only handle small parameter spaces~\cite{winkler_complexity_2019,quatmann_parameter_2016,prophesy}. In the past few years, new approaches have improved these methods to handle larger parameter spaces and more PCTL properties~\cite{junges2019parameter,cubuktepe_synthesis_2018}.

We also note the area of robust policy synthesis and verification in uncertain MDPs. In these problems, given a temporal logic property and parametric MDP, the goal is to synthesize a policy which can satisfy the property for all instantiations of the parametric MDP within some constraints. Several solutions have been developed which solve this problem for various logics and parameter spaces~\cite{nilim_robust_2005,puggelli_polynomial-time_nodate,wolff_robust_2012,delgado_efficient_2011,hahn2011synthesis}.
\section{Conclusion}
In this paper, we used existing probabilistic modeling and verification approaches to develop a formal framework of adversarial robustness in stochastic systems, including DTMCs and agents acting in MDPs.  Our framework allows us to formally define various threat models. Specifically, we provide four threat models under which an adversary can perturb the transition probabilities in an $\varepsilon$ ball around the
original system transitions. Two of these threat models preserve the structure of the underlying system, while the remaining two lift the structure-preserving assumption.

We use this notion of adversarial robustness to define three DTMC adversarial robustness problems: adversarial robustness verification, maximal $\delta$ synthesis, and worst case attack synthesis. We provide two optimization solutions. The first uses direct computation of the property satisfaction probability at each iteration of the optimization. The second pre-computes a symbolic representation of the property satisfaction probability and then uses that as the objective to the optimization. We found that utilizing symbolic model checking techniques can result in fast optimization time for small numbers of states and parameters, however as the parameter space increases in size, the symbolic solution becomes unmanageable. In the case of less restrictive adversarial models (and thus larger parameter spaces), direct computation of the property satisfaction probabilities is more scalable than the symbolic method. 

Our framework and solutions are useful to determine system robustness and to determine which components and transitions in a system are most critical to protect. We consider three case studies (two DTMCs and one composed DTMC from a policy and MDP), and compare attacks synthesized under each of the four threat models with respect to various PCTL$^*$ properties. We analyze in detail the resulting attack strategies, and provide component analysis of a system using our framework.

Future work can solve the problem of synthesizing robust systems with respect to these robustness definitions. Additionally, a potential line of future work is to use these definitions of adversarial robustness to verify systems with machine learning components including deep reinforcement learning.
\section*{Acknowledgment}
We would like to thank Professor David Parker and the anonymous reviewers for their helpful feedback on early drafts of this paper. Thanks also to Giorgio Severi, Matthew Jageilski, Ben Weintraub, Konstantina Bairaktari, and Max Von Hippel for their help on technical aspects of this work. Special thanks to Joyce Oakley, Shiyanbade Animashaun, and Di Wu, whose support through the COVID-19 pandemic was invaluable to the research and writing of this paper.

This work has been supported by the National Science Foundation under NSF SaTC awards CNS-1717634 and CNS-1801546.

\bibliographystyle{IEEEtran}
\bibliography{refs}

% Generated by IEEEtran.bst, version: 1.14 (2015/08/26)
\begin{thebibliography}{10}
\providecommand{\url}[1]{#1}
\csname url@samestyle\endcsname
\providecommand{\newblock}{\relax}
\providecommand{\bibinfo}[2]{#2}
\providecommand{\BIBentrySTDinterwordspacing}{\spaceskip=0pt\relax}
\providecommand{\BIBentryALTinterwordstretchfactor}{4}
\providecommand{\BIBentryALTinterwordspacing}{\spaceskip=\fontdimen2\font plus
\BIBentryALTinterwordstretchfactor\fontdimen3\font minus
  \fontdimen4\font\relax}
\providecommand{\BIBforeignlanguage}[2]{{%
\expandafter\ifx\csname l@#1\endcsname\relax
\typeout{** WARNING: IEEEtran.bst: No hyphenation pattern has been}%
\typeout{** loaded for the language `#1'. Using the pattern for}%
\typeout{** the default language instead.}%
\else
\language=\csname l@#1\endcsname
\fi
#2}}
\providecommand{\BIBdecl}{\relax}
\BIBdecl

\bibitem{KNP12a}
M.~Kwiatkowska, G.~Norman, and D.~Parker, ``Probabilistic verification of
  herman’s self-stabilisation algorithm,'' \emph{Formal Aspects of
  Computing}, vol.~24, no.~4, pp. 661--670, 2012.

\bibitem{DKNP06}
M.~Duflot, M.~Kwiatkowska, G.~Norman, and D.~Parker, ``A formal analysis of
  {Bluetooth} device discovery,'' \emph{Int. Journal on Software Tools for
  Technology Transfer}, vol.~8, no.~6, pp. 621--632, 2006.

\bibitem{KN02}
M.~Kwiatkowska and G.~Norman, ``Verifying randomized {Byzantine} agreement,''
  in \emph{Proc. Formal Techniques for Networked and Distributed Systems
  (FORTE'02)}, ser. LNCS, D.~Peled and M.~Vardi, Eds., vol. 2529.\hskip 1em
  plus 0.5em minus 0.4em\relax Springer, 2002, pp. 194--209.

\bibitem{daws_symbolic_2005}
C.~Daws, ``\BIBforeignlanguage{en}{Symbolic and {Parametric} {Model} {Checking}
  of {Discrete}-{Time} {Markov} {Chains}},'' in
  \emph{\BIBforeignlanguage{en}{Theoretical {Aspects} of {Computing} - {ICTAC}
  2004}}.\hskip 1em plus 0.5em minus 0.4em\relax Berlin, Heidelberg: Springer
  Berlin Heidelberg, 2005, vol. 3407, pp. 280--294, series Title: Lecture Notes
  in Computer Science.

\bibitem{gu_demonstration_2020}
R.~Gu, ``\BIBforeignlanguage{en}{Demonstration of {TAMAA}},'' Jan. 2020,
  publisher: Zenodo.

\bibitem{gotsman_deep_2020}
T.~P. Gros, H.~Hermanns, J.~Hoffmann, M.~Klauck, and M.~Steinmetz,
  ``\BIBforeignlanguage{en}{Deep {Statistical} {Model} {Checking}},'' in
  \emph{\BIBforeignlanguage{en}{Formal {Techniques} for {Distributed}
  {Objects}, {Components}, and {Systems}}}, A.~Gotsman and A.~Sokolova,
  Eds.\hskip 1em plus 0.5em minus 0.4em\relax Cham: Springer International
  Publishing, 2020, vol. 12136, pp. 96--114, series Title: Lecture Notes in
  Computer Science.

\bibitem{storm}
C.~Dehnert, S.~Junges, J.-P. Katoen, and M.~Volk, ``A storm is coming: A modern
  probabilistic model checker,'' in \emph{Computer Aided Verification},
  R.~Majumdar and V.~Kun{\v{c}}ak, Eds.\hskip 1em plus 0.5em minus 0.4em\relax
  Cham: Springer International Publishing, 2017, pp. 592--600.

\bibitem{param}
E.~M. Hahn, H.~Hermanns, B.~Wachter, and L.~Zhang,
  ``\BIBforeignlanguage{en}{{PARAM}: {A} {Model} {Checker} for {Parametric}
  {Markov} {Models}},'' in \emph{\BIBforeignlanguage{en}{Computer {Aided}
  {Verification}}}.\hskip 1em plus 0.5em minus 0.4em\relax Berlin, Heidelberg:
  Springer Berlin Heidelberg, 2010, vol. 6174, pp. 660--664, series Title:
  Lecture Notes in Computer Science.

\bibitem{prophesy}
C.~Dehnert, S.~Junges, N.~Jansen, F.~Corzilius, M.~Volk, H.~Bruintjes, J.-P.
  Katoen, and E.~Ábrahám, ``\BIBforeignlanguage{en}{{PROPhESY}: {A}
  {PRObabilistic} {ParamEter} {SYnthesis} {Tool}},'' in
  \emph{\BIBforeignlanguage{en}{Computer {Aided} {Verification}}}.\hskip 1em
  plus 0.5em minus 0.4em\relax Cham: Springer International Publishing, 2015,
  vol. 9206, pp. 214--231, series Title: Lecture Notes in Computer Science.

\bibitem{winkler_complexity_2019}
T.~Winkler, S.~Junges, G.~A. Pérez, and J.-P. Katoen,
  ``\BIBforeignlanguage{en}{On the {Complexity} of {Reachability} in
  {Parametric} {Markov} {Decision} {Processes}},'' p. 17 pages, 2019, artwork
  Size: 17 pages Medium: application/pdf Publisher: Schloss Dagstuhl -
  Leibniz-Zentrum fuer Informatik GmbH, Wadern/Saarbruecken, Germany Version
  Number: 1.0.

\bibitem{quatmann_parameter_2016}
T.~Quatmann, C.~Dehnert, N.~Jansen, S.~Junges, and J.-P. Katoen,
  ``\BIBforeignlanguage{en}{Parameter {Synthesis} for {Markov} {Models}:
  {Faster} {Than} {Ever}},'' \emph{\BIBforeignlanguage{en}{arXiv:1602.05113
  [cs]}}, May 2016, arXiv: 1602.05113.

\bibitem{junges2019parameter}
S.~Junges, E.~{\'A}brah{\'a}m, C.~Hensel, N.~Jansen, J.-P. Katoen, T.~Quatmann,
  and M.~Volk, ``Parameter synthesis for markov models,'' \emph{arXiv preprint
  arXiv:1903.07993}, 2019.

\bibitem{cubuktepe_synthesis_2018}
M.~Cubuktepe, N.~Jansen, S.~Junges, J.-P. Katoen, and U.~Topcu,
  ``\BIBforeignlanguage{en}{Synthesis in {pMDPs}: {A} {Tale} of 1001
  {Parameters}},'' \emph{\BIBforeignlanguage{en}{arXiv:1803.02884 [cs, math]}},
  Jul. 2018, arXiv: 1803.02884.

\bibitem{KNSW07}
M.~Kwiatkowska, G.~Norman, J.~Sproston, and F.~Wang, ``Symbolic model checking
  for probabilistic timed automata,'' \emph{Information and Computation}, vol.
  205, no.~7, pp. 1027--1077, 2007.

\bibitem{KNS02a}
M.~Kwiatkowska, G.~Norman, and J.~Sproston, ``Probabilistic model checking of
  the {IEEE} 802.11 wireless local area network protocol,'' in \emph{Proc. 2nd
  Joint International Workshop on Process Algebra and Probabilistic Methods,
  Performance Modeling and Verification (PAPM/PROBMIV'02)}, ser. LNCS,
  H.~Hermanns and R.~Segala, Eds., vol. 2399.\hskip 1em plus 0.5em minus
  0.4em\relax Springer, 2002, pp. 169--187.

\bibitem{Fru11}
M.~Fruth, ``Formal methods for the analysis of wireless network protocols,''
  Ph.D. dissertation, Oxford University, 2011.

\bibitem{gridworld}
M.~L. Littman, A.~R. Cassandra, and L.~P. Kaelbling, ``Learning policies for
  partially observable environments: Scaling up,'' in \emph{Machine Learning
  Proceedings 1995}.\hskip 1em plus 0.5em minus 0.4em\relax Elsevier, 1995, pp.
  362--370.

\bibitem{hermanns_model-checking_2006}
K.~Sen, M.~Viswanathan, and G.~Agha, ``\BIBforeignlanguage{en}{Model-{Checking}
  {Markov} {Chains} in the {Presence} of {Uncertainties}},'' in
  \emph{\BIBforeignlanguage{en}{Tools and {Algorithms} for the {Construction}
  and {Analysis} of {Systems}}}.\hskip 1em plus 0.5em minus 0.4em\relax Berlin,
  Heidelberg: Springer Berlin Heidelberg, 2006, vol. 3920, pp. 394--410, series
  Title: Lecture Notes in Computer Science.

\bibitem{2020SciPy-NMeth}
P.~Virtanen, R.~Gommers, T.~E. Oliphant, M.~Haberland, T.~Reddy, D.~Cournapeau,
  E.~Burovski, P.~Peterson, W.~Weckesser, J.~Bright, S.~J. {van der Walt},
  M.~Brett, J.~Wilson, K.~J. Millman, N.~Mayorov, A.~R.~J. Nelson, E.~Jones,
  R.~Kern, E.~Larson, C.~J. Carey, {\.I}.~Polat, Y.~Feng, E.~W. Moore,
  J.~{VanderPlas}, D.~Laxalde, J.~Perktold, R.~Cimrman, I.~Henriksen, E.~A.
  Quintero, C.~R. Harris, A.~M. Archibald, A.~H. Ribeiro, F.~Pedregosa, P.~{van
  Mulbregt}, and {SciPy 1.0 Contributors}, ``{{SciPy} 1.0: Fundamental
  Algorithms for Scientific Computing in Python},'' \emph{Nature Methods},
  vol.~17, pp. 261--272, 2020.

\bibitem{prism}
M.~Kwiatkowska, G.~Norman, and D.~Parker, ``{PRISM} 4.0: Verification of
  probabilistic real-time systems,'' in \emph{Proc. 23rd International
  Conference on Computer Aided Verification (CAV'11)}, ser. LNCS,
  G.~Gopalakrishnan and S.~Qadeer, Eds., vol. 6806.\hskip 1em plus 0.5em minus
  0.4em\relax Springer, 2011, pp. 585--591.

\bibitem{baier2008principles}
C.~Baier and J.-P. Katoen, \emph{Principles of model checking}.\hskip 1em plus
  0.5em minus 0.4em\relax MIT press, 2008.

\bibitem{biggio2013evasion}
B.~Biggio, I.~Corona, D.~Maiorca, B.~Nelson, N.~{\v{S}}rndi{\'c}, P.~Laskov,
  G.~Giacinto, and F.~Roli, ``Evasion attacks against machine learning at test
  time,'' in \emph{Joint European conference on machine learning and knowledge
  discovery in databases}.\hskip 1em plus 0.5em minus 0.4em\relax Springer,
  2013, pp. 387--402.

\bibitem{szegedy2013intriguing}
C.~Szegedy, W.~Zaremba, I.~Sutskever, J.~Bruna, D.~Erhan, I.~Goodfellow, and
  R.~Fergus, ``Intriguing properties of neural networks,'' \emph{arXiv preprint
  arXiv:1312.6199}, 2013.

\bibitem{ISCASMC}
E.~M. Hahn, Y.~Li, S.~Schewe, A.~Turrini, and L.~Zhang, ``iscas m c: a
  web-based probabilistic model checker,'' in \emph{International Symposium on
  Formal Methods}.\hskip 1em plus 0.5em minus 0.4em\relax Springer, 2014, pp.
  312--317.

\bibitem{abdulla_model_2011}
E.~Bartocci, R.~Grosu, P.~Katsaros, C.~R. Ramakrishnan, and S.~A. Smolka,
  ``\BIBforeignlanguage{en}{Model {Repair} for {Probabilistic} {Systems}},'' in
  \emph{\BIBforeignlanguage{en}{Tools and {Algorithms} for the {Construction}
  and {Analysis} of {Systems}}}.\hskip 1em plus 0.5em minus 0.4em\relax Berlin,
  Heidelberg: Springer Berlin Heidelberg, 2011, vol. 6605, pp. 326--340, series
  Title: Lecture Notes in Computer Science.

\bibitem{junges_parameter_2019}
S.~Junges, E.~Abraham, C.~Hensel, N.~Jansen, J.-P. Katoen, T.~Quatmann, and
  M.~Volk, ``\BIBforeignlanguage{en}{Parameter {Synthesis} for {Markov}
  {Models}},'' \emph{\BIBforeignlanguage{en}{arXiv:1903.07993 [cs]}}, Mar.
  2019, arXiv: 1903.07993.

\bibitem{topcu2012}
U.~Topcu, N.~Ozay, J.~Liu, and R.~M. Murray, ``On synthesizing robust discrete
  controllers under modeling uncertainty,'' in \emph{Proceedings of the 15th
  ACM International Conference on Hybrid Systems: Computation and Control},
  ser. HSCC '12.\hskip 1em plus 0.5em minus 0.4em\relax New York, NY, USA:
  Association for Computing Machinery, 2012, p. 85–94.

\bibitem{py4j}
``Py4j - a bridge between python and java.'' https://www.py4j.org/, 2021.

\bibitem{PSGD}
N.~Carlini and D.~Wagner, ``Towards evaluating the robustness of neural
  networks,'' in \emph{2017 ieee symposium on security and privacy (sp)}.\hskip
  1em plus 0.5em minus 0.4em\relax IEEE, 2017, pp. 39--57.

\bibitem{Reluplex}
G.~Katz, C.~Barrett, D.~L. Dill, K.~Julian, and M.~J. Kochenderfer, ``Reluplex:
  An efficient smt solver for verifying deep neural networks,'' in
  \emph{International Conference on Computer Aided Verification}.\hskip 1em
  plus 0.5em minus 0.4em\relax Springer, 2017, pp. 97--117.

\bibitem{Reluval}
S.~Wang, K.~Pei, J.~Whitehouse, J.~Yang, and S.~Jana, ``Formal security
  analysis of neural networks using symbolic intervals,'' in \emph{27th USENIX
  Security Symposium}, 2018, pp. 1599--1614.

\bibitem{AI2}
T.~{Gehr}, M.~{Mirman}, D.~{Drachsler-Cohen}, P.~{Tsankov}, S.~{Chaudhuri}, and
  M.~{Vechev}, ``Ai2: Safety and robustness certification of neural networks
  with abstract interpretation,'' in \emph{2018 IEEE Symposium on Security and
  Privacy (SP)}, 2018, pp. 3--18.

\bibitem{verily}
Y.~Kazak, C.~Barrett, G.~Katz, and M.~Schapira, ``Verifying deep-{RL}-driven
  systems,'' in \emph{Proceedings of the 2019 Workshop on Network Meets AI \&
  ML}, ser. NetAI’19.\hskip 1em plus 0.5em minus 0.4em\relax New York, NY,
  USA: Association for Computing Machinery, 2019, p. 83–89.

\bibitem{viper}
O.~Bastani, Y.~Pu, and A.~Solar-Lezama, ``Verifiable reinforcement learning via
  policy extraction,'' in \emph{Advances in Neural Information Processing
  Systems 31}, S.~Bengio, H.~Wallach, H.~Larochelle, K.~Grauman,
  N.~Cesa-Bianchi, and R.~Garnett, Eds.\hskip 1em plus 0.5em minus 0.4em\relax
  Curran Associates, Inc., 2018, pp. 2494--2504.

\bibitem{dreossi2019compositional}
T.~Dreossi, A.~Donz{\'e}, and S.~A. Seshia, ``Compositional falsification of
  cyber-physical systems with machine learning components,'' \emph{Journal of
  Automated Reasoning}, vol.~63, no.~4, pp. 1031--1053, 2019.

\bibitem{UT_Austin}
G.~Anderson, S.~Pailoor, I.~Dillig, and S.~Chaudhuri, ``Optimization and
  abstraction: A synergistic approach for analyzing neural network
  robustness,'' in \emph{Proceedings of the 40th ACM SIGPLAN Conference on
  Programming Language Design and Implementation}, ser. PLDI 2019.\hskip 1em
  plus 0.5em minus 0.4em\relax New York, NY, USA: Association for Computing
  Machinery, 2019, p. 731–744.

\bibitem{baldan_perturbation_2014}
T.~Chen, Y.~Feng, D.~S. Rosenblum, and G.~Su,
  ``\BIBforeignlanguage{en}{Perturbation {Analysis} in {Verification} of
  {Discrete}-{Time} {Markov} {Chains}},'' in
  \emph{\BIBforeignlanguage{en}{{CONCUR} 2014 – {Concurrency}
  {Theory}}}.\hskip 1em plus 0.5em minus 0.4em\relax Berlin, Heidelberg:
  Springer Berlin Heidelberg, 2014, vol. 8704, pp. 218--233, series Title:
  Lecture Notes in Computer Science.

\bibitem{su_perturbation_2014}
G.~Su and D.~S. Rosenblum, ``\BIBforeignlanguage{en}{Perturbation analysis of
  stochastic systems with empirical distribution parameters},'' in
  \emph{\BIBforeignlanguage{en}{Proceedings of the 36th {International}
  {Conference} on {Software} {Engineering}}}.\hskip 1em plus 0.5em minus
  0.4em\relax Hyderabad India: ACM, May 2014, pp. 311--321.

\bibitem{chechik_fact_2016}
R.~Calinescu, K.~Johnson, and C.~Paterson, ``\BIBforeignlanguage{en}{{FACT}:
  {A} {Probabilistic} {Model} {Checker} for {Formal} {Verification} with
  {Confidence} {Intervals}},'' in \emph{\BIBforeignlanguage{en}{Tools and
  {Algorithms} for the {Construction} and {Analysis} of {Systems}}}.\hskip 1em
  plus 0.5em minus 0.4em\relax Berlin, Heidelberg: Springer Berlin Heidelberg,
  2016, vol. 9636, pp. 540--546, series Title: Lecture Notes in Computer
  Science.

\bibitem{zhang_behavioral_2020}
C.~Zhang, D.~Garlan, and E.~Kang, ``\BIBforeignlanguage{en}{A behavioral notion
  of robustness for software systems},'' in
  \emph{\BIBforeignlanguage{en}{Proceedings of the 28th {ACM} {Joint} {Meeting}
  on {European} {Software} {Engineering} {Conference} and {Symposium} on the
  {Foundations} of {Software} {Engineering}}}.\hskip 1em plus 0.5em minus
  0.4em\relax Virtual Event USA: ACM, Nov. 2020, pp. 1--12.

\bibitem{buccafurri_enhancing_1999}
F.~Buccafurri, T.~Eiter, G.~Gottlob, and N.~Leone,
  ``\BIBforeignlanguage{en}{Enhancing model checking in verification by {AI}
  techniques},'' \emph{\BIBforeignlanguage{en}{Artificial Intelligence}}, vol.
  112, no. 1-2, pp. 57--104, Aug. 1999.

\bibitem{courcoubetis_complexity_1995}
C.~Courcoubetis and M.~Yannakakis, ``\BIBforeignlanguage{en}{The complexity of
  probabilistic verification},'' \emph{\BIBforeignlanguage{en}{Journal of the
  ACM}}, vol.~42, no.~4, pp. 857--907, Jul. 1995.

\bibitem{lanotte2004decidability}
R.~Lanotte, A.~Maggiolo-Schettini, and A.~Troina, ``Decidability results for
  parametric probabilistic transition systems with an application to
  security,'' in \emph{Proceedings of the Second International Conference on
  Software Engineering and Formal Methods, 2004. SEFM 2004.}\hskip 1em plus
  0.5em minus 0.4em\relax IEEE, 2004, pp. 114--121.

\bibitem{hahn2011synthesis}
E.~M. Hahn, T.~Han, and L.~Zhang, ``Synthesis for pctl in parametric markov
  decision processes,'' in \emph{Nasa formal methods symposium}.\hskip 1em plus
  0.5em minus 0.4em\relax Springer, 2011, pp. 146--161.

\bibitem{hahn_probabilistic_2011}
E.~M. Hahn, H.~Hermanns, and L.~Zhang, ``\BIBforeignlanguage{en}{Probabilistic
  reachability for parametric {Markov} models},''
  \emph{\BIBforeignlanguage{en}{International Journal on Software Tools for
  Technology Transfer}}, vol.~13, no.~1, pp. 3--19, Jan. 2011.

\bibitem{lanotte2007parametric}
R.~Lanotte, A.~Maggiolo-Schettini, and A.~Troina, ``Parametric probabilistic
  transition systems for system design and analysis,'' \emph{Formal Aspects of
  Computing}, vol.~19, no.~1, pp. 93--109, 2007.

\bibitem{nilim_robust_2005}
A.~Nilim and L.~El~Ghaoui, ``\BIBforeignlanguage{en}{Robust {Control} of
  {Markov} {Decision} {Processes} with {Uncertain} {Transition} {Matrices}},''
  \emph{\BIBforeignlanguage{en}{Operations Research}}, vol.~53, no.~5, pp.
  780--798, Oct. 2005.

\bibitem{puggelli_polynomial-time_nodate}
A.~Puggelli, W.~Li, A.~L. Sangiovanni-Vincentelli, and S.~A. Seshia,
  ``\BIBforeignlanguage{en}{Polynomial-{Time} {Veriﬁcation} of {PCTL}
  {Properties} of {MDPs} with {Convex} {Uncertainties}},'' p.~16.

\bibitem{wolff_robust_2012}
E.~M. Wolff, U.~Topcu, and R.~M. Murray, ``\BIBforeignlanguage{en}{Robust
  control of uncertain {Markov} {Decision} {Processes} with temporal logic
  specifications},'' in \emph{\BIBforeignlanguage{en}{2012 {IEEE} 51st {IEEE}
  {Conference} on {Decision} and {Control} ({CDC})}}.\hskip 1em plus 0.5em
  minus 0.4em\relax Maui, HI, USA: IEEE, Dec. 2012, pp. 3372--3379.

\bibitem{delgado_efficient_2011}
K.~V. Delgado, S.~Sanner, and L.~N. de~Barros,
  ``\BIBforeignlanguage{en}{Efficient solutions to factored {MDPs} with
  imprecise transition probabilities},''
  \emph{\BIBforeignlanguage{en}{Artificial Intelligence}}, vol. 175, no. 9-10,
  pp. 1498--1527, Jun. 2011.

\end{thebibliography}

\end{document}